\documentclass[a4paper,12pt,fleqn]{article}
\usepackage{amsmath,amssymb}
\usepackage{amsthm}
\usepackage{graphicx}
\usepackage{float}
\usepackage[np]{numprint}
\usepackage{calc}    
\usepackage{tikz,tkz-tab}
\usepackage{diagbox}
\usepackage{multirow}
\usepackage{braket}
\usepackage{varioref}
\usepackage{hyperref}
\usepackage{lipsum}
\usepackage[left=3cm,right=3cm,top=2.5cm,bottom=4cm]{geometry}
\usepackage{multirow}
\usetikzlibrary{decorations.pathreplacing}


\newcommand\swap{\mathtt{SWAP}}
\newcommand\cnot{\mathtt{CNOT}}
\newcommand\PauliX{Pauli-$\mathtt X$ }
\newcommand\PauliY{Pauli-$\mathtt Y$ }
\newcommand\PauliZ{Pauli-$\mathtt Z$ }


\renewcommand\phi{\varphi}
\renewcommand\epsilon{\varepsilon}
\renewcommand\geq{\geqslant}
\renewcommand\leq{\leqslant}

\newcommand\Z{\mathbb{Z}}
\newcommand\C{\mathbb{C}}
\newcommand\F{\mathbb{F}_{2}}
\newcommand\ee{\mathrm e}
\newcommand\ii{\mathrm i}
\newcommand\HS{\mathcal{H}} 

\newcommand\symg[1][n]{\mathfrak{S}_{#1}}  
\newcommand\swapg[1][n]{{\left\langle\swap\right\rangle}_{#1}} 
\newcommand\cnotg[1][n]{{\left\langle\cnot\right\rangle}_{#1}} 
\newcommand\GL[1][n]{\mathrm{GL}_{#1}(\mathbb{F}_2)} 
\newcommand\UG[1][2n]{\mathcal{U}_{2^n}} 


\newcommand\param{\mathrm{PMAX}}
\newtheorem{example}{Example}

\newtheorem{prop}[example]{Proposition}
\newtheorem{cor}[example]{Corollary}
\newtheorem{lem}[example]{Lemma}

\newtheorem{rem}[example]{Remark}
\newtheorem{conj}[example]{Conjecture}

\begin{document}
\setlength\parindent{0mm}

\overfullrule=0mm
\floatstyle{boxed} 
\restylefloat{figure}

\title{Quantum circuits generating four-qubit  maximally entangled states}

\author{Marc Bataille \\ marc.bataille1@univ-rouen.fr \\
  \\ LITIS laboratory, Universit\'e Rouen-Normandie \thanks{685 Avenue de l'Universit\'e, 76800 Saint-\'Etienne-du-Rouvray. France.}}

\date{}

\maketitle

\begin{abstract}
  We describe quantum circuits generating four-qubit maximally entangled states, the amount of entanglement being quantified by using the absolute value of the Cayley hyperdeterminant as an entanglement monotone. More precisely we show that this type of four-qubit entangled states can be obtained by the action of a family of CNOT circuits on some special states of the LU orbit of the state $\ket{0000}$.
\end{abstract}

\section{Introduction}

The original idea of using the hyperdeterminant  to classify multipartite entanglement goes back to Miyake \cite{2002MW, 2003Miyake}.
The hyperdeterminant (in the sense of Gelfand \textit{et al.} \cite{1992GKL}) is a generalization of the  determinant to higher dimensions.  
Let $\ket{\psi}=\sum_{i_0, i_1,\dots,i_{n-1}\in\{0,1\}}a_{i_0 i_1\dots i_{n-1}}\ket{i_0i_1\dots i_{n-1}}$ be the state vector of an $n$-qubit system in the Hilbert space $\HS^{\otimes n}=(\C^2)^{\otimes n}$, then the hyperdeterminant of the format $2^n$, denoted in this paper by $\Delta_n$, is an homogenous multivariate polynomial in the $2^n$ variables $a_{i_0 i_1\dots i_{n-1}}$, with coefficients in $\Z$. It is invariant (up to a sign) by permutation of the qubits and also invariant by the action of the group $SLOCC$,  the group of stochastic local operations assisted by classical communication, assimilated to the cartesian product $SL(2,\C)^n$.
According to Miyake \cite{2002MW, 2003Miyake}, the more generic entanglement holds only for the states on which the hyperdeterminant does not vanish and the absolute value of $\Delta_n$ quantifies the amount of generic entanglement.
\medskip

In this article, we focus on a four-qubit quantum system. In this case, the hyperdeterminant $\Delta_4$ is of degree 24 and an expression of $\Delta_4$ in terms of fundamental SLOCC invariant polynomials of lower degree was given by Luque and Thibon \cite{2003LT}. 
Following Miyake, we consider as Gour and Wallach \cite{2012GW}, that a four-qubit state with the highest amount of generic entanglement can be defined as a state maximizing the absolute value of $\Delta_4$. In the rest of the paper, we refer to this type of state as a \textit{maximum hyperdeterminant state} (sometimes abbreviated as MHS).
In a paper from 2012 \cite{2012GW}, Gour and Wallach conjectured that the state $\ket{L}$ (see Fig. \ref{delta_max}) is the unique maximum hyperdeterminant state, up to a local unitary operation. This conjecture was proved in 2013 by Chen and Djokovic \cite{2013CD} and the maximal value of $|\Delta_4|$ is $\frac{1}{2^83^9}=\frac{1}{\np{5038848}}\simeq 1.98\times10^{-7}$. Moreover, the state $\ket{L}$ has also the property to be the only state (up to local unitary operations) maximizing the average Tsallis $\alpha$-entropy of entanglement, for all $\alpha > 2$ \cite{2010GW}.
Let us also mention  two other maximum hyperdeterminant states, which have the property of having real coordinates (see Fig. \ref{delta_max}): $\ket{\Phi_5}$ (reported by Osterloh and Siewert in \cite{2006OS} and by Alsina in his PhD thesis \cite{2017Alsina}) and $\ket{M_{2222}}$ (reported by Hamza Jaffali in his PhD thesis, unpublished) .
\medskip


In Quantum Information and Computation, entangled states, and in particular maximally entangled states, play the role of an important physical resource (see \textit{e.g.} the introduction of \cite{2013CD}). Despite of that, to our knowledge,  there is no proposal in the academic literature for quantum circuits capable of producing the state $\ket{L}$ or any other MHS. The goal of this work is merely to fill this gap  by describing a family of quantum circuits that enable the generation of maximum hyperdeterminant states.
We show that a MHS can be obtained by the action a certain type of CNOT gate circuits on a fully factorized state, namely a state of the LU orbit of $\ket{0000}$. As a consequence of this result, one can construct quantum circuits of relatively small depth generating the three states $\ket{L}$, $\ket{\Phi_5}$ and $\ket{M_{2222}}$.
\medskip

The paper is structured as follows. Section \ref{background} is a reminder on quantum circuits of $\cnot$ gates and $\swap$ gates, where we introduce most of our notations and give some useful conjugation rules between these gates. In Section \ref{method}, we present the methodology and algorithms used in our numerical approach to find circuits generating maximum hyperdeterminant states. The two next sections (\ref{cnot-max} and \ref{cosets}) are dedicated to the description of these circuits. Finally, in Section \ref{generation}, we propose three simple quantum circuits generating the states $\ket{L}$, $\ket{\Phi_5}$ and $M_{2222}$,
as well as an implementation of a circuit generating the state $\ket{L}$ into a quantum computer provided by the IBM quantum experience at \url{https://quantum-computing.ibm.com/}.
\medskip

This article goes along with a Python module than can be downloaded at \url{https://github.com/marcbataille/maximum-hyperdeterminant-states}.
The module provides an implementation of the different algorithms, quantum gates and quantum states used in this work. As the proof of some assertions (mostly numerical equalities) consists only of basic linear algebra and calculus, we chose to refer the reader to the corresponding function of the module that does the job.

       \begin{figure}[h]
     \begin{equation}
        \ket{L}=\frac{1}{\sqrt3}(\ket{u_0}+\omega\ket{u_1}+\omega^*\ket{u_2})\label{L-def}
      \end{equation}
      $\text{where : }\omega=\ee^{\frac{\ii\pi}{3}}$

      $\phantom{where : }\ket{u_0}=\frac{1}{2}(\ket{0000} + \ket{0011} + \ket{1100} + \ket{1111})=\ket{\Phi^+}\ket{\Phi^+}$
      
      $\phantom{where : }\ket{u_1}=\frac{1}{2}(\ket{0000} - \ket{0011} - \ket{1100} + \ket{1111})=\ket{\Phi^-}\ket{\Phi^-}$
      
      $\phantom{where : }\ket{u_2}=\frac{1}{2}(\ket{0101} + \ket{0110} + \ket{1001} + \ket{1010})=\ket{\Psi^+}\ket{\Psi^+}$

      $\phantom{where : }\ket{\Phi^{\pm}}=\frac{1}{\sqrt 2}(\ket{00}\pm \ket{11}),\quad\ket{\Psi^{\pm}}=\frac{1}{\sqrt 2}(\ket{01}\pm \ket{10})$
      
      \begin{equation}
        \ket{\Phi_5} = \frac{1}{\sqrt6}(\ket{0001} + \ket{0010} + \ket{0100} + \ket{1000} + \sqrt{2}\ket{1111})
      \end{equation}

      \begin{equation}
        \ket{M_{2222}} = \frac{1}{\sqrt8}\ket{v_1}+\frac{\sqrt6}{4}\ket{v_2}+\frac{1}{\sqrt2}\ket{v_3}
      \end{equation}
            $\text{where: } \ket{v_1}=\frac{1}{\sqrt{6}}(\ket{0000} + \ket{0101} - \ket{0110} - \ket{1001} + \ket{1010} + \ket{1111})$
      
      $\phantom{where: }\ket{v_2}=\frac{1}{\sqrt{2}}(\ket{0011} + \ket{1100})$
      
      $\phantom{where: }\ket{v_3}=\frac{1}{\sqrt{8}}(-\ket{0001} + \ket{0010} - \ket{0100} + \ket{0111} + \ket{1000} - \ket{1011} + \ket{1101} - \ket{1110})$
      \caption{4-qubits states for which $|\Delta_4|$ is maximal. \label{delta_max}}
    \end{figure}

\section{Quantum circuits of $\cnot$ and $\swap$ gates\label{background}}
In this section we introduce the main notations and conventions of the paper and we recall the definition of some classical quantum gates (Table \ref{classic}) as well as some properties of the $\cnot$ gates and $\swap$ gates often used in the rest of the article.\medskip

Let $n\geq 1$ be the number of qubits of the considered quantum register. We label each qubit from 0 to $n-1$, thus following the usual convention. For coherence we also number the lines and columns of a $n\times n$ matrix from 0 to $n-1$ and we consider that a permutation in the symmetric group $\symg$ is a bijection of $\{0,\dots,n-1\}$. 

\begin{table}
  \begin{center}
\begin{tabular}{|c|c|c|}\hline
  Name&Symbol&Matrix\\\hline\hline
  \PauliX& $X$&$\begin{bmatrix}0&1\\1&0\end{bmatrix}$\\
  \PauliY&$Y$&$\begin{bmatrix}0&-\ii\\\ii&0\end{bmatrix}$\\
  \PauliZ&$Z$&$\begin{bmatrix}1&0\\0&-1\end{bmatrix}$\\
  Rotation around the $\hat x$ axis&$R_x(\theta)$&$\ee^{-\ii\theta X/2}=
                                                   \begin{bmatrix}\cos\frac{\theta}{2}&-\ii\sin\frac{\theta}{2}
  \\-\ii\sin\frac{\theta}{2}&\cos\frac{\theta}{2}\end{bmatrix}$\\
  Rotation around the $\hat y$ axis&$  R_y(\theta)$&$\ee^{-\ii\theta Y/2}=
                                                     \begin{bmatrix}\cos\frac{\theta}{2}&-\sin\frac{\theta}{2}\\\sin\frac{\theta}{2}&\cos\frac{\theta}{2}\end{bmatrix}$\\
  Rotation around the $\hat z$ axis&$R_z(\theta)$&$\ee^{-\ii\theta Z/2}=
                                                   \begin{bmatrix}\ee^{-\ii\theta/2}&0\\0&\ee^{\ii\theta/2}\end{bmatrix}$\\
  Phase&$P$&$\begin{bmatrix}1&0\\0&\ii\end{bmatrix}$\\
  T-gate&$T$&$\begin{bmatrix}1&0\\0&\ee^{\ii\pi/4}\end{bmatrix}$\\
  Hadamard&$H$&$\frac{\sqrt 2}{2}\begin{bmatrix}1&1\\1&-1\end{bmatrix}$\\\hline
\end{tabular}
\end{center}
\caption{Classical single qubit unitary gates\label{classic}}
\end{table}

\medskip

If two normalized vectors $\ket{\psi}$ and $\ket{\psi'}$  of the Hilbert space $\HS^{\otimes n}$ are equal up to a global phase, then they represent physically the same state and we write
$\ket{\psi}\simeq\ket{\psi'}$. In the same way, we write $U\simeq U'$ for two unitary operators which are equal up to a global phase.
In the design of quantum circuits, we use the following correspondences between the classical gates :
\begin{align}
  &R_z(\pi)\simeq Z,\ 
    R_z(\pi/2)\simeq P,\  
    R_z(-\pi/2)\simeq P^{\dag},\  
  R_z(\pi/4)\simeq T\label{Rz-classical}\\
  &R_y(\pi)\simeq Y,\ 
  R_y(\pi/2)= H Z=X H,\ 
  R_y(-\pi/2)=ZH=HX\label{Ry-classical}
  \end{align}
When we apply locally a single-qubit gate $U$ to the qubit $i$ of a $n$-qubit register, the corresponding action on the $n$-qubit system is that of the unitary operator
\begin{equation}
  U_i=\underbrace{I\otimes\dots\otimes I}_{i\ \mathrm{times}} \otimes U \otimes\underbrace{I \otimes \dots \otimes I}_{n-i-1 \text{ times}}=I^{\otimes i}\otimes U \otimes I^{\otimes n-i-1},\label{single}
  \end{equation}
where $\otimes$ is the Kronecker product of matrices and $I$ the identity matrix in dimension 2.
As an example, if $n=4$, $H_1= I\otimes H\otimes I\otimes I$ and $H_0H_3=H\otimes I\otimes I\otimes H$.
We also use  vectors of $\F^n$ as labels to indicate the set of qubits on  which the single-qubit unitary $U$ is applied. Let $v=[v_0,\dots,v_{n-1}]^t$ be a (column) vector of $\F^n$, we denote by $U_{v}$ the product $\prod_iU_i^{v_i}$.
\medskip

A $\cnot$ gate with target on qubit $i$ and control  on qubit $j$ is denoted by $X_{[ij]}$ (not to be confused with $X_i$ which denotes a \PauliX gate applied on qubit $i$).
The group generated by the $\cnot$ gates acting on an $n$-qubit quantum system is denoted by $\cnotg$. Let us denote by $\GL$ the general linear group over $\F$ in dimension $n$. A transvection matrix $[ij]$ ($i,j=1\dots n-1$ and $i\neq j$), is the matrix of $\GL$ defined by $[ij]=I_n+E_{ij}$,  where $I_n$ is the identity in dimension $n$ and $E_{ij}$ is the matrix with all entries equal to zero but the entry $(i,j)$ that is equal to 1. We recall that the transvection matrices generate the group $\GL$ and that multiplying a matrix $M$ to the left by a transvection matrix $[ij]$ is equivalent to adding the row $j$ to the row $i$ of $M$. From these facts, one can deduce that the group $\cnotg$ is isomorphic to $\GL$, a possible isomorphism associating, to any gate $X_{[ij]}$, the transvection matrix $[ij]$ (see \cite{2020B} for more details).
The order of $\cnotg$ is therefore equal to the order of $\GL$ :
\begin{equation}
|\cnotg|=2^{\frac{n(n-1)}{2}}\prod_{i=1}^n(2^i-1).\label{order-cnot}
  \end{equation}
Let $A$ be any  matrix in $\GL$, we denote by $X_A$ the element of $\cnotg$ associated to $A$, \textit{i.e.} $X_A$ is the product of any sequence of $\cnot$ gates $X_{[i_1,j_i]},\dots, X_{[i_p,j_p]}$ such that $A$ can be decomposed in the product of the transvection matrices $[i_1,j_i],\dots,[i_p,j_p]$.
\medskip

The $\swap$ gate that exchanges qubits $i$ and $j$ is denoted by $S_{(ij)}$.
Let $\sigma$ be a permutation of the symmetric group $\symg$. We also denote by $\sigma$ the permutation matrix associated to the permutation $\sigma$. This matrix is defined as the matrix $A=(a_{ij})$ in $\GL$ such that
$a_{ij}=1$ if and only if $i=\sigma(j)$. We recall that multiplying a matrix $M$ to the left by $\sigma$ is equivalent to applying the permutation $\sigma$ to the rows of $M$. In this case, each row $R_i$ is replaced by the row $R_{\sigma^{-1}(i)}$. The group of permutation matrices is a subgroup of $\GL$ which is isomorphic to the group generated by the $\swap$ gates acting on $n$ qubits : to each $\swap$ gate $S_{(ij)}$ corresponds the transposition matrix $(ij)$. We denote by $S_{\sigma}$ the product of any sequence of $\swap$ gates $S_{(i_1j_1)},\dots, S_{(i_p,j_p)}$ such that $\sigma=(i_1j_1)\dots (i_pj_p)$.
Let $\tau$ be a transposition of $\symg$, it is easy to check that
  $S_{\tau}X_{[ij]}S_{\tau}=X_{[\tau(i)\tau(j)]}$, hence by induction 
\begin{equation}
  S_{\sigma}X_{[ij]}S_{\sigma}^{-1}=X_{[\sigma(i)\sigma(j)]},\label{Xij-conj-sigma}
\end{equation}
for any permutation $\sigma$.
Let $U$ be a single-qubit unitary matrix and $U_i$ the unitary corresponding to the action of $U$ on qubit $i$ (Identity \eqref{single}), one has for any permutation $\sigma$
and any vector $v$ in $\F^{n}$ :
\begin{align}
&S_{\sigma}U_iS_{\sigma}^{-1}=U_{\sigma(i)}\label{conj-sigma-U}\\
&S_{\sigma}U_vS_{\sigma}^{-1}=U_{\sigma v}\label{conj-sigma-v}
\end{align}
\medskip

 The Pauli group for $n$ qubits is the group generated by the Pauli gates $X_i,Y_i$ and $Z_i$ ($0\leq i \leq n-1$). Since $Y=\ii XZ$ and $XZ=-ZX$, any element of this group can be written uniquely in the form
  \begin{equation}
    \ii^{\lambda}X_uZ_v,\label{Pauli-group}
\end{equation} where $u$ and $v$ are two vectors of the space $\F^n$ and $\lambda\in\{0,1,2,3\}$.\medskip 

It is not difficult to prove the following conjugation rules of a Pauli gate by a $\cnot$ gate :
$X_{[ij]}Z_iX_{[ij]}=Z_iZ_j$, $X_{[ij]}Z_jX_{[ij]}=Z_j$, $X_{[ij]}X_iX_{[ij]}=X_i$ and $X_{[ij]}X_jX_{[ij]}=X_iX_j$. These rules can be generalized as
  \begin{equation}
    X_AX_uZ_vX_A^{-1}=X_{Au}Z_{A^{-t} v},\label{Pauli-conj}
  \end{equation}
  where $u$ and $v$ are vectors in $\F^n$, $A$ is a matrix in $\GL$ and $A^{-t}$ a shorthand for $\left(A^{-1}\right)^{t}$.
  
\section{Methodology used in the numerical exploration\label{method}}
We address the following problem : is it possible to generate a state maximizing $\Delta_4$ by applying a $\cnot$ gate circuit on a state of the LU orbit of $\ket{0000}$ ?
We use the classical Z-Y decomposition of a single qubit unitary operator in the form $\ee ^{\ii\phi}R_z(\alpha)R_y(\beta)R_z(\alpha')$ (see \textit{e.g.} \cite[Th. 4.1]{2011NC}).
Using this decomposition, any fully factorized unitary operator $U$ depends, up to a global phase, on the 12 real parameters of the matrix 

\begin{equation}
   \mathcal P=\begin{bmatrix}\alpha_0&\beta_0&\alpha_0'\\
     \alpha_1&\beta_1&\alpha_1'\\
     \alpha_2&\beta_2&\alpha_2'\\
     \alpha_3&\beta_3&\alpha_3'\end{bmatrix}.
\end{equation}

We define the unitary $U(\mathcal P)$ by 
\begin{align}
  \begin{split}
    U(\mathcal P) = R_z(\alpha_0)R_y(\beta_0)R_z(\alpha_0')&\otimes R_z(\alpha_1)R_y(\beta_1)R_z(\alpha_1')\\
    \otimes R_z(\alpha_2)&R_y(\beta_2)R_z(\alpha_2')\otimes R_z(\alpha_3)R_y(\beta_3)R_z(\alpha_3').
    \end{split}
  \end{align}
As a rotation around the $\hat z$ axis applied to $\ket{0}$ is just a change of phase, it is possible to write any state vector of the LU orbit of  $\ket{0000}$ (up to a global phase), by using only two parameters for each qubit. So, any state in the LU orbit of $\ket{0000}$ is equal (up to a global phase) to the state $\ket{\mathcal P}$ defined by 
\begin{equation}
\ket{\mathcal P}=R_z(\alpha_0)R_y(\beta_0)\otimes R_z(\alpha_1)R_y(\beta_1)\otimes R_z(\alpha_2)R_y(\beta_2)\otimes R_z(\alpha_3)R_y(\beta_3)\ket{0000}.
\end{equation}
Using the definition of the rotation matrices around the $\hat z$ and $\hat y$ axes, one has
\begin{equation}
\ket{\mathcal P}=(a_0\ket{0} + a_1\ket{1})\otimes(b_0\ket{0} + b_1\ket{1})\otimes(c_0\ket{0} + c_1\ket{1})\otimes (d_0\ket{0} + d_1\ket{1}),
  \end{equation}
  $\text{where : }(a_0,a_1)=(\ee^{-\ii\alpha_0/2}\cos\frac{\beta_0}{2}, \ee^{\ii\alpha_0/2}\sin\frac{\beta_0}{2})$,
  
  $\phantom{\text{where : }}(b_0,b_1)=(\ee^{-\ii\alpha_1/2}\cos\frac{\beta_1}{2},\ee^{\ii\alpha_1/2}\sin\frac{\beta_1}{2})$,
  
  $\phantom{\text{where : }}(c_0,c_1)=(\ee^{-\ii\alpha_2/2}\cos\frac{\beta_2}{2},\ee^{\ii\alpha_2/2}\sin\frac{\beta_2}{2})$,
  
  $\phantom{\text{where : }}(d_0,d_1)=(\ee^{-\ii\alpha_3/2}\cos\frac{\beta_3}{2},\ee^{\ii\alpha_3/2}\sin\frac{\beta_3}{2})$.
  \medskip
  
  Any state resulting from the action of an unitary operator in $\cnotg[4]$, on a state of the LU orbit of $\ket{0000}$ can be written (up to a global phase)
  in the form $X_A\ket{\mathcal P}$,
where $A$ is a matrix in $\GL[4]$ and $\mathcal P$ a matrix of parameters.\medskip

In order to determine the states of type $X_A\ket{\mathcal P}$ capable of maximizing $|\Delta_4|$, it is sufficient to consider the right cosets of the subgroup of $\cnotg[4]$  generated by the $\swap$ gates (group $\swapg[4]\simeq \symg[4])$, because $|\Delta_4|$ is invariant under permutation of the qubits,
\textit{i.e.} $|\Delta_4(X_A\ket{\mathcal P})|=|\Delta_4(X_{\sigma A}\ket{\mathcal P})|$ for any permutation matrix $\sigma$.
The order of the group $\cnotg[4]$ is 20160 (Identity \eqref{order-cnot}), so the number of right cosets of $\swapg[4]$ in $\cnotg[4]$  is $20160/24=840$. For each coset, we compute a representative of minimal length in the generators $X_{[ij]}$ (function \texttt{right\_cosets\_perm\_GL4} of the Python module).
\medskip

The computation of $\Delta_4$ for a given state is performed using the algorithm proposed by Luque and Thibon in \cite[Section IV]{2003LT} (function \texttt{hyper\_det} of the Python module). After eliminating all coset representatives $X_A$ such that $|\Delta_4(X_A\ket{\mathcal P})|$ vanishes for any $\mathcal P$, we obtain a list of 333 representatives (function \texttt{non\_zero\_HD\_strings} of the Python module).  For each of them, we use a random walk on the search space defined by the eight parameters of $\ket{\mathcal P}$ in order to maximize the value of $|\Delta_4(X_A\ket{\mathcal P})|$ (function \texttt{search\_max\_HD} of the Python module). We check that it is possible to reach the maximal value of $\frac{1}{2^83^9}$ for $|\Delta_4|$ (accuracy $10^{-22}$) for only 12 coset representatives. These cosets are described in Section \ref{cosets}. Finally, from the approximate values of $\mathcal P$ computed by the random walk heuristic, it is possible to deduce the exact values of $\mathcal P$ such that  $|\Delta_4(X_A\ket{\mathcal P})|=\frac{1}{2^83^9}$.

\section{A $\cnot$ circuit to reach the maximum of $|\Delta_4|$\label{cnot-max}}
Let $i,j,k,\ell$ be distinct integers in $\{0,1,2,3\}$. We define $M_{k}^{(i,j)}$, a product of $\cnot$ gates, and $A_{k}^{(i,j)}$, the bit matrix of $\GL[4]$ associated to $M_{k}^{(i,j)}$, as follows :
\begin{align}
  M_{k}^{(i,j)}&=X_{[ij]}X_{[jk]}X_{[ki]}X_{[i\ell]}X_{[\ell j]}\\
  A_{k}^{(i,j)}&=[ij][jk][ki][i\ell][\ell j]
\end{align}
In this section, we show how to reach the maximum of $|\Delta_4|$ using the operator
\begin{equation}
  M_{2}^{(0,1)}=X_{[01]}X_{[12]}X_{[20]}X_{[03]}X_{[31]}.\label{M01-2}
  \end{equation}
  The results are extended to any operator of type $M_{k}^{(i,j)}$ in the next section.
  Since $A_{2}^{(0,1)}=[01][12][20][03][31]=\begin{bmatrix}0&1&1&0\\1&0&1&1\\1&1&1&1\\0&1&0&1\end{bmatrix}$ 
and $A_{3}^{(0,1)}=[01][13][30][02][21]=\begin{bmatrix}0&1&0&1\\1&0&1&1\\0&1&1&0\\1&1&1&1\end{bmatrix}$, we remark that $A_{3}^{(0,1)}=(023)A_{2}^{(0,1)}$,
so
\begin{equation}
  M_{3}^{(0,1)}=S_{(023)}M_{2}^{(0,1)}\label{same-coset},
\end{equation}
which means that $M_{3}^{(0,1)}$ and $M_{2}^{(0,1)}$ represent the same coset. This coset is denoted by
$\overline{(0,1)}$.
\begin{prop}\label{generate-psi}
  Let $\mathcal P_{\mathrm{max}}$ and $\mathcal P_{\mathrm{max}}'$ be the two matrices of parameters defined by
  \begin{equation}
\mathcal P_{\mathrm{max}}=\begin{bmatrix}\pi/2&\pi/2&0\\
  \pi/2&\pi/2&0\\
  \pi/4&\cos^{-1}\frac{\sqrt 3}{3}&0\\
  \pi/4&\cos^{-1}\frac{\sqrt 3}{3}&0\end{bmatrix},
\qquad
\mathcal P_{\mathrm{max}}'=\begin{bmatrix}\pi/2&\pi/2&0\\
  \pi/2&\pi/2&0\\
  3\pi/4&\cos^{-1}\frac{\sqrt 3}{3}&0\\
  3\pi/4&\cos^{-1}\frac{\sqrt 3}{3}&0\end{bmatrix},\label{P-max}
\end{equation}
then the states
\begin{equation}
\ket{\psi_{\mathrm{max}}}=M_{2}^{(0,1)}\ket{\mathcal P_{\mathrm{max}}}\label{psi-max}
\end{equation}
and
\begin{equation}
  \ket{\psi_{\mathrm{max}}'}=M_{2}^{(0,1)}\ket{\mathcal P'_{\mathrm{max}}}\label{psi-max-prime}
\end{equation}
maximize the absolute value of the four-qubit hyperdeterminant. One has
\begin{equation}
  \Delta_4(\ket{\psi_{\mathrm{max}}})=\Delta_4(\ket{\psi_{\mathrm{max}}'})=-\frac{1}{2^{8}3^{9}},
  \end{equation}
\begin{equation}
  \ket{\psi_{\mathrm{max}}}=\frac{\sqrt3}{3}\ket{w_1}+\frac{3+\sqrt 3}{6}\ee^{\ii\frac{\pi}{4}}\ket{w_2}+\frac{3-\sqrt 3}{6}\ee^{\ii\frac{\pi}{4}}\ket{w_3}
\end{equation}
and
\begin{equation}
  \ket{\psi_{\mathrm{max}}'}=\frac{\sqrt3}{3}\ket{w_1}+\frac{3+\sqrt 3}{6}\ee^{-\ii\frac{\pi}{4}}\ket{w_2}+\frac{3-\sqrt 3}{6}\ee^{\ii\frac{3\pi}{4}}\ket{w_3},
\end{equation}
where \begin{align*}
            \ket{w_1}&=\frac{1}{\sqrt 8}(\ket{0001}+\ii\ket{0011}+\ket{0101}-\ii\ket{0111}+\ket{1000}+\ii\ket{1010}+\ket{1100}-\ii\ket{1110}),\\
            \ket{w_2}&=\frac{1}{2}(-\ket{0000}-\ii\ket{0110}-\ii\ket{1011} +\ket{1101}),\\
            \ket{w_3}&=\frac{1}{2}(\ket{0010}+\ii\ket{0100}-\ii\ket{1001} + \ket{1111}).
            \end{align*}
          \end{prop}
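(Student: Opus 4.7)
My plan is to prove Proposition \ref{generate-psi} by direct computation: once the parameter matrix $\mathcal P_{\mathrm{max}}$ is fixed, $\ket{\mathcal P_{\mathrm{max}}}$ is an explicit fully factorized tensor product and $M_2^{(0,1)}$ acts as a permutation of the computational basis, so everything reduces to bookkeeping plus one polynomial evaluation. First, I would expand $\ket{\mathcal P_{\mathrm{max}}}$ into the tensor product of its four single-qubit factors using the formulas for $R_z(\alpha)R_y(\beta)\ket 0$ given at the end of Section \ref{method}. For qubits $0$ and $1$, the values $(\alpha,\beta)=(\pi/2,\pi/2)$ give, up to a local phase, the single-qubit state $\tfrac{1}{\sqrt 2}(\ket 0 + \ii\ket 1)$; for qubits $2$ and $3$, the values $(\alpha,\beta)=(\pi/4,\cos^{-1}(1/\sqrt 3))$ give amplitudes $\cos(\beta/2)$ and $\sin(\beta/2)$ which, via the half-angle identities applied to $\cos\beta=1/\sqrt 3$, admit closed-form expressions in $\sqrt 3$ decorated with phases $\ee^{\mp\ii\pi/8}$. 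Expanding the tensor product then yields a $16$-term superposition with fully explicit amplitudes.

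Second, I would apply $M_2^{(0,1)}=X_{[01]}X_{[12]}X_{[20]}X_{[03]}X_{[31]}$ to the result. Since each $\cnot$ permutes the computational basis, $M_2^{(0,1)}$ induces a single permutation $\pi_M$ of the sixteen basis kets, obtained by reading off the action of the gates from right to left on each input string. Relabelling the amplitudes via $\pi_M$ produces a new $16$-term superposition, and grouping the terms according to their magnitudes and phases should allow me to recognise precisely the combination $\tfrac{\sqrt 3}{3}\ket{w_1}+\tfrac{3+\sqrt 3}{6}\ee^{\ii\pi/4}\ket{w_2}+\tfrac{3-\sqrt 3}{6}\ee^{\ii\pi/4}\ket{w_3}$; the three vectors $\ket{w_i}$ are defined exactly so that this identification works. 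The same procedure, repeated with $\alpha_2=\alpha_3=3\pi/4$ in place of $\pi/4$, gives $\ket{\psi'_{\mathrm{max}}}$, the only change being a rotation of the phases attached to $\ket{w_2}$ and $\ket{w_3}$ by $-\pi/2$ and $+\pi/2$ respectively.

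Third, I would compute $\Delta_4(\ket{\psi_{\mathrm{max}}})$ from these explicit amplitudes using Luque and Thibon's algorithm from \cite{2003LT} and check that the value is $-\tfrac{1}{2^8 3^9}$; since Chen and Djokovic \cite{2013CD} have shown this absolute value to be the maximum of $|\Delta_4|$, the maximality claim follows, and the same computation applies to $\ket{\psi'_{\mathrm{max}}}$. The main obstacle is computational rather than conceptual: $\Delta_4$ is a degree-$24$ polynomial in sixteen variables, so this last step is impractical by hand, and I would delegate it to the \texttt{hyper\_det} function of the companion Python module, evaluating symbolically so as to avoid the ambiguity that a pure floating-point computation would leave. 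The remaining steps are elementary linear algebra and trigonometric simplification; the only real care required is in tracking the global and local phases arising from the $R_z$ factors and from the half-angle reductions.
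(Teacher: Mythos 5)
Your proposal is correct and takes essentially the same route as the paper, whose entire proof is a one-line delegation to the \texttt{check\_psi\_max\_is\_MHS} function of the Python module: both amount to a direct computational verification of the explicit amplitudes followed by an evaluation of $\Delta_4$ via the Luque--Thibon algorithm, with maximality then following from the Chen--Djokovic bound. You simply spell out the intermediate bookkeeping (tensor-product expansion, the action of $M_2^{(0,1)}$ as a basis permutation, the half-angle reductions) that the paper leaves implicit in the software.
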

          
          \begin{proof}
            The different assertions can be checked using the function
            
            \texttt{check\_psi\_max\_is\_MHS} of the Python module.
\end{proof}

In our numerical search for matrices of parameters $\mathcal P$ such that $M_{2}^{(0,1)}\ket{\mathcal P}$ maximizes $\Delta_4$, it appears that all values of $\mathcal P$ computed by the random walk heuristic are related to $\mathcal P_{\mathrm{max}}$ or to $\mathcal P_{\mathrm{max}}'$ by simple operations. These operations are described by the following lemma and its corollary.
Numerical results suggest that these operations applied to the matrices $\mathcal P_{\mathrm{max}}$ or $\mathcal P_{\mathrm{max}}'$ are sufficient to describe all the possible matrices $\mathcal P$ such that $M_{2}^{(0,1)}\ket{\mathcal P}$ is a MHS (Conjecture \ref{ops-conj}).

\begin{lem}\label{P-ops-Pauli} Let $\mathcal P$ be a matrix of parameters and, for any $k$ in $\{0,1,2,3\}$, let us denote by :
  
  $\mathcal P_{\alpha_k+\pi}$  the matrix obtained from $\mathcal P$ by adding $\pi$ to the parameter $\alpha_k$,
  
  $\mathcal P_{-\beta_k}$, the matrix obtained from $\mathcal P$ by taking the opposite of $\beta_k$,

  $\mathcal P_{-\alpha_k,\  \beta_k+\pi}$, the matrix obtained from $\mathcal P$ by taking the opposite of $\alpha_k$ and adding  $\pi$ to $\beta_k$ .
  Then :
\begin{align}
  &\ket{\mathcal P_{\alpha_k+\pi}}=-\ii Z_k\ket{\mathcal P}\label{alpha+pi}\\
  &\ket{\mathcal P_{-\beta_k}}= Z_k\ket{\mathcal P}\label{-beta}\\
  &\ket{\mathcal P_{-\alpha_k,\ \beta_k+\pi}}=-\ii Y_k\ket{\mathcal P}\label{-alpha-beta+pi}
\end{align}
  \end{lem}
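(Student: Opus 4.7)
The plan is to reduce each of the three identities to a single-qubit computation. Since
\[
\ket{\mathcal P}=\bigotimes_{\ell=0}^{3}\left(\ee^{-\ii\alpha_\ell/2}\cos\tfrac{\beta_\ell}{2}\ket{0}+\ee^{\ii\alpha_\ell/2}\sin\tfrac{\beta_\ell}{2}\ket{1}\right)
\]
is a tensor product, and since each parameter modification in the statement as well as each Pauli gate on qubit $k$ affects only the $k$-th tensor factor, it will suffice to verify each identity on that single factor and then tensor with the identity on the three remaining qubits.

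I would then dispatch the three cases directly. For \eqref{alpha+pi}, the shift $\alpha_k\mapsto\alpha_k+\pi$ multiplies $\ee^{-\ii\alpha_k/2}$ by $\ee^{-\ii\pi/2}=-\ii$ and $\ee^{\ii\alpha_k/2}$ by $\ee^{\ii\pi/2}=\ii$; pulling the common factor $-\ii$ out, the remaining diagonal action on $(\ket{0},\ket{1})$ is $\mathrm{diag}(1,-1)=Z$, yielding $-\ii Z_k$. For \eqref{-beta}, the substitution $\beta_k\mapsto-\beta_k$ uses the parity of cosine and sine to flip the sign of the $\ket{1}$ coefficient while leaving the $\ket{0}$ coefficient unchanged, which is exactly the action of $Z_k$. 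For \eqref{-alpha-beta+pi}, the shift $\beta_k\mapsto\beta_k+\pi$ uses $\cos(x+\pi/2)=-\sin x$ and $\sin(x+\pi/2)=\cos x$ to swap cosine and sine (with a sign on the former), and the simultaneous negation of $\alpha_k$ exchanges $\ee^{\ii\alpha_k/2}$ and $\ee^{-\ii\alpha_k/2}$; comparing the resulting vector with the action of $Y=\begin{bmatrix}0&-\ii\\ \ii&0\end{bmatrix}$ shows that the new single-qubit factor equals $-\ii Y$ applied to the original one.

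The computations are elementary trigonometric and exponential identities; the only subtlety is tracking the global phase carefully, which is why the first and third identities carry a factor $-\ii$ while the second does not. I do not anticipate any substantive obstacle.
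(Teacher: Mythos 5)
Your proposal is correct and follows essentially the same route as the paper: reduce each identity to the $k$-th tensor factor of $\ket{\mathcal P}$ and compare the coefficients $(a_0,a_1)=(\ee^{-\ii\alpha_k/2}\cos\tfrac{\beta_k}{2},\ \ee^{\ii\alpha_k/2}\sin\tfrac{\beta_k}{2})$ before and after the parameter change with the action of the corresponding Pauli matrix times a phase. The only difference is cosmetic: the paper writes out only the third identity (taking $k=0$ without loss of generality) and declares the other two similar, whereas you sketch all three.
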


  \begin{proof}We prove only Identity \eqref{-alpha-beta+pi}, the proofs of Identities \eqref{alpha+pi} and \eqref{-beta} being similar. Without loss of generality, we suppose that the last column of the matrix $\mathcal P$ is null and $k=0$. On the one hand :
    
    $-\ii Y_0\ket{\mathcal P}=(-\ii Y(a_0\ket{0} + a_1\ket{1}))\otimes(b_0\ket{0} + b_1\ket{1})\otimes(c_0\ket{0} + c_1\ket{1})\otimes (d_0\ket{0} + d_1\ket{1})$,
    $\phantom{-\ii Y_0\ket{\mathcal P}}=(-a_1\ket{0} + a_0\ket{1})\otimes(b_0\ket{0} + b_1\ket{1})\otimes(c_0\ket{0} + c_1\ket{1})\otimes (d_0\ket{0} + d_1\ket{1})$,
    where $(a_0,a_1)=(\ee^{-\ii\alpha_0/2}\cos\frac{\beta_0}{2}, \ee^{\ii\alpha_0/2}\sin\frac{\beta_0}{2})$,

    On the other hand :

    $\ket{\mathcal P_{-\alpha_0,\ \beta_0+\pi}}=(a_0'\ket{0} + a_1'\ket{1})\otimes(b_0\ket{0} + b_1\ket{1})\otimes(c_0\ket{0} + c_1\ket{1})\otimes (d_0\ket{0} + d_1\ket{1})$,
    where $(a_0',a_1')=(\ee^{-\ii(-\alpha_0/2)}\cos\frac{\beta_0 + \pi}{2}, \ee^{-\ii\alpha_0/2}\sin\frac{\beta_0+\pi}{2})=(-a_1,a_0)$.
    
    Hence $-\ii Y_0\ket{\mathcal P}=\ket{\mathcal P_{-\alpha_0,\ \beta_0+\pi}}$.
  \end{proof}

\begin{cor}\label{P-ops-max}
  Let $A$ be a matrix in $\GL[4]$ and $\mathcal P$ a matrix of parameters.

  If $|\Delta_4|$ is maximal for $X_A\ket{\mathcal P}$, then $|\Delta_4|$ is also maximal for $X_A\ket{\mathcal P_{\alpha_k+\pi}}$,
   $X_A\ket{\mathcal P_{-\beta_k}}$ and $X_A\ket{\mathcal P_{-\alpha_k,\ \beta_k+\pi}}$, for any $k$ in $\{0,1,2,3\}$. 
 \end{cor}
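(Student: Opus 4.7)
The plan is to reduce each of the three statements in the corollary to the fact that $|\Delta_4|$ is invariant under local unitary transformations (which in turn follows from the $SL(2,\C)^n$-invariance mentioned in the introduction, together with the fact that single-qubit unitaries have determinants of modulus $1$). The main work is thus the translation: take each of the three modified parameter matrices, use Lemma \ref{P-ops-Pauli} to express the corresponding state as a Pauli operator acting on $\ket{\mathcal P}$, then push that Pauli operator past $X_A$ using the conjugation rule \eqref{Pauli-conj}.

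More precisely, I would treat the three cases in parallel. By Lemma \ref{P-ops-Pauli} we can write
\begin{align*}
X_A\ket{\mathcal P_{\alpha_k+\pi}} &= -\ii\, X_A Z_k \ket{\mathcal P}, \\
X_A\ket{\mathcal P_{-\beta_k}} &= X_A Z_k \ket{\mathcal P}, \\
X_A\ket{\mathcal P_{-\alpha_k,\ \beta_k+\pi}} &= -\ii\, X_A Y_k \ket{\mathcal P}.
\end{align*}
Since $Y_k = \ii X_k Z_k$, each of $Z_k$, $Y_k$ is of the form $\ii^\lambda X_u Z_v$ for suitable $u,v\in\F^4$ and $\lambda$. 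Applying \eqref{Pauli-conj} gives
\begin{equation*}
X_A Z_k X_A^{-1} = Z_{A^{-t}e_k}, \qquad X_A Y_k X_A^{-1} \simeq X_{Ae_k} Z_{A^{-t}e_k},
\end{equation*}
where $e_k$ is the $k$-th standard basis vector of $\F^4$. In each case, $X_A\ket{\mathcal P_{*}}$ equals, up to a global phase, $(X_{u'}Z_{v'})\, X_A\ket{\mathcal P}$ for some $u',v'\in\F^4$.

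Finally, $X_{u'}Z_{v'}$ is a tensor product of Pauli matrices, hence a local unitary operator. Since the absolute value of the hyperdeterminant is preserved by local unitaries (each single-qubit factor lies in $U(2)$, whose determinants have modulus $1$, and $\Delta_4$ is $SL(2,\C)^4$-covariant up to products of such determinants), one concludes
\begin{equation*}
|\Delta_4(X_A\ket{\mathcal P_{*}})| = |\Delta_4(X_A\ket{\mathcal P})|,
\end{equation*}
and the maximality transfers.

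The only mildly subtle point is the invocation of LU-invariance of $|\Delta_4|$, but this is a standard consequence of the $SL(2,\C)^4$-covariance already recalled in the introduction; everything else is a direct computation combining Lemma \ref{P-ops-Pauli} with Identity \eqref{Pauli-conj}, so I would not expect any serious obstacle.
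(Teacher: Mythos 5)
Your proposal is correct and follows essentially the same route as the paper: both use Lemma \ref{P-ops-Pauli} to rewrite the modified state as a Pauli operator acting on $\ket{\mathcal P}$, conjugate that operator through $X_A$ via Identity \eqref{Pauli-conj}, and conclude by LU-invariance of $|\Delta_4|$. The only difference is that you spell out the three cases and the justification of LU-invariance a bit more explicitly than the paper does.
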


 \begin{proof} Suppose that $|\Delta_4|$ is maximal for $X_A\ket{\mathcal P}$.
   Let $\mathcal P'\in\{\mathcal P_{\alpha_k+\pi}, \mathcal P_{-\beta_k}, \mathcal P_{-\alpha_k,\ \beta_k+\pi}\}$. From Lemma \ref{P-ops-Pauli} and Identity \eqref{Pauli-group}, there exists two vectors $u$ and $v$ in $\F^4$ such that $\ket{\mathcal P'}\simeq X_uZ_v\ket{\mathcal P}$. Hence $X_A\ket{\mathcal P'}\simeq X_AX_uZ_v\ket{\mathcal P}\simeq X_AX_uZ_vX_A^{-1}X_A\ket{\mathcal P}$. So, using Identity \eqref{Pauli-conj}, we deduce that  $X_A\ket{\mathcal P'}\simeq X_{Au}Z_{A^{-t}v}X_A\ket{\mathcal P}$ and consequently, $X_A\ket{\mathcal P'}$ is in the LU orbit of
   $X_A\ket{\mathcal P}$, which implies that $|\Delta_4|$ is maximal for the state $X_A\ket{\mathcal P'}$.
   \end{proof}

   \begin{example}
     Let us apply the following sequence of  operations on $\mathcal P_{\mathrm{max}}$ : $\alpha_0\leftarrow\alpha_0+\pi$, $\beta_2\leftarrow-\beta_2$, $\alpha_3\leftarrow -\alpha_3$,
     $\beta_3\leftarrow\beta_3+\pi$. The resulting matrix of parameters is $\mathcal P=\begin{bmatrix}3\pi/2&\pi/2&0\\
  \pi/2&\pi/2&0\\
  \pi/4&-\cos^{-1}\frac{\sqrt 3}{3}&0\\
  -\pi/4&\cos^{-1}\frac{\sqrt 3}{3}+\pi&0\end{bmatrix}$ and $\ket{\mathcal P}=(-\ii Z_0)Z_2(-\ii Y_3)\ket{\mathcal P_{\mathrm{max}}}\simeq X_3Z_0Z_2Z_3\ket{\mathcal P_{\mathrm{max}}}$. Let $u=[0,0,0,1]^t$ and $v=[1,0,1,1]^t$. One has : $A^{(0,1)}_2=\begin{bmatrix}0&1&1&0\\1&0&1&1\\1&1&1&1\\0&1&0&1\end{bmatrix}$, $A^{(0,1)}_2u=[0,1,1,1]^t$,
$\left(A^{(0,1)}_2\right)^{-t}=\begin{bmatrix}1&0&1&0\\1&1&1&1\\0&1&1&1\\1&0&0&1\end{bmatrix}$, $\left(A^{(0,1)}_2\right)^{-t}v=[0,1,0,0]^t$. Hence
$M^{(0,1)}_2\ket{\mathcal P}\simeq X_1X_2X_3Z_1\ket{\psi_{\mathrm{max}}}$.
     \end{example}

    \begin{rem}
      We observe that the matrices $\mathcal P_{\mathrm{max}}$ and $\mathcal P_{\mathrm{max}}'$ are not related by the operations on parameters described in Lemma \ref{P-ops-Pauli}, \textit{i.e.} there does not exist any gate $X_uZ_v$ in the four-qubit Pauli group such that  $\ket{\mathcal P_{\mathrm{max}}'}\simeq X_uZ_v\ket{\mathcal P_{\mathrm{max}}}$.  This implies that the state $\ket{\psi_{\mathrm{max}}}$  and the state  $\ket{\psi_{\mathrm{max}}'}$ define distinct orbits by the action of the four-qubit Pauli group.
      Actually, from Identities  \eqref{P-max} and \eqref{Rz-classical}, one has $\ket{\mathcal P_{\mathrm{max}}'}\simeq P_2P_3\ket{\mathcal P_{\mathrm{max}}}$. Using the method described in Section \ref{method}, we compute a matrix of parameters
      $\mathcal P_{\psi\rightarrow\psi'}=\begin{bmatrix}-\pi/2&-\pi/2&-\pi/2\\
        \pi/2&\pi&\pi\\
        0&\pi/2&\pi\\
        -\pi/2&-\pi/2&-\pi/2\end{bmatrix}$
      and a phase $\phi=-\frac{\pi}{3}$ such that
      $\ket{\psi_{\mathrm{max}}'}=\ee^{\ii\phi}U(\mathcal P_{\psi \rightarrow \psi'})\ket{\psi_{\mathrm{max}}}$.
      Then, using Identities \eqref{Rz-classical} and \eqref{Ry-classical}, we obtain :
      \begin{equation}
        \ket{\psi_{\mathrm{max}}'} \simeq PHP^{\dag}\otimes PX\otimes H\otimes PHP^{\dag}  \ket{\psi_{\mathrm{max}}}\label{LU-psi-to-psi-prime}
      \end{equation}
      This last identity can be checked using the function 
\emph{\texttt{check\_psi\_to\_psi\_prime}} of the Python module.
     

          

    \end{rem}

    \begin{rem}\label{PMAX-def}
Since $M_{3}^{(0,1)}=S_{(023)}M_{2}^{(0,1)}$ , it is easy to see that the set of all matrices of parameters $\mathcal P$ having their last column null such that $M^{(0,1)}_3\ket{\mathcal P}$ maximizes $|\Delta_4|$ is equal to the set of all matrices of parameters $\mathcal P$ having their last column null such that $M^{(0,1)}_2\ket{\mathcal P}$ maximizes $|\Delta_4|$. We denote this set by $\param^{(0,1)}$.
\end{rem}

\begin{conj}\label{ops-conj}
      Any matrix in $\param^{(0,1)}$ can be obtain from $\mathcal P_{\mathrm{max}}$ or from $\mathcal P_{\mathrm{max}}'$ by a sequence of the operations on parameters described by Lemma \ref{P-ops-Pauli}. 
    \end{conj}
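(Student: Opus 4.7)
\emph{Proof proposal.} The plan combines the Chen--Djokovic characterization of MHS states \cite{2013CD} --- every MHS is LU-equivalent to $\ket{L}$ --- with an algebraic analysis of the intersection between the LU orbit of $\ket{L}$ and the image under $M_{2}^{(0,1)}$ of the fully factorized states.

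By Chen--Djokovic, $\mathcal{P} \in \param^{(0,1)}$ iff there exists a local unitary $V = V_0 \otimes V_1 \otimes V_2 \otimes V_3$ with $V M_{2}^{(0,1)} \ket{\mathcal{P}} \simeq \ket{L}$, equivalently $\ket{\mathcal{P}} \simeq (M_{2}^{(0,1)})^{-1} V^{\dagger} \ket{L}$. The task therefore reduces to describing those $V$ for which $(M_{2}^{(0,1)})^{-1} V^{\dagger}\ket{L}$ is fully factorized, and reading off $\mathcal{P}$ from the Bloch-sphere parameters of each one-qubit factor. Moreover, the operations on $\mathcal{P}$ described in Lemma \ref{P-ops-Pauli} lift, via the conjugation rule \eqref{Pauli-conj}, to right-multiplication of $V$ by a specific local Pauli gate, since a local Pauli placed on the left of $M_{2}^{(0,1)}$ passes through it to another local Pauli. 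Two matrices $\mathcal{P}$ thus lie in the same Pauli orbit iff the corresponding $V$'s differ on the right by a local Pauli, and the conjecture reduces to showing that, modulo left multiplication by the stabilizer of $\ket{L}$ in LU and right multiplication by local Pauli, the admissible $V$'s form exactly two double cosets.

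The heart of the argument is the explicit enumeration of these double cosets. Parametrizing $V$ by its 12 Z--Y--Z Euler angles, the full factorization of $(M_{2}^{(0,1)})^{-1} V^{\dagger}\ket{L}$ is encoded by the vanishing of the $2 \times 2$ minors of its four one-versus-three qubit matricizations, producing a polynomial system in the trigonometric data. A dimension count is encouraging: the projective LU orbit of $\ket{L}$ has real dimension $12$ (its LU-stabilizer being finite by \cite{2013CD}), the projective manifold of fully factorized states has real dimension $8$, and the ambient projective Hilbert space has real dimension $30$, so the intersection is expected to be zero-dimensional. The main obstacle is the rigorous analysis of this system and the verification that its solution set is exhausted by the Pauli orbits of $\mathcal{P}_{\mathrm{max}}$ and $\mathcal{P}_{\mathrm{max}}'$; a direct Gr\"obner-basis or resultant computation should work in principle but is likely heavy, whereas a cleaner route would exploit the explicit finite LU-stabilizer of $\ket{L}$ to parametrize candidates modulo that symmetry, reducing the problem to a finite case-check against the two candidate matrices.
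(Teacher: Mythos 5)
This statement is labelled a \emph{conjecture} in the paper, and the author offers no proof of it: the only supporting evidence is the numerical random-walk search described in Section \ref{method}, which found that every parameter matrix it converged to was related to $\mathcal P_{\mathrm{max}}$ or $\mathcal P_{\mathrm{max}}'$ by the operations of Lemma \ref{P-ops-Pauli}. Your reduction of the problem is sound as far as it goes: by Chen--Djokovic, $\mathcal P\in\param^{(0,1)}$ iff $M_2^{(0,1)}\ket{\mathcal P}$ lies in the LU orbit of $\ket{L}$, and via Identity \eqref{Pauli-conj} the Pauli operations of Lemma \ref{P-ops-Pauli} do correspond to right-multiplying the local unitary $V$ by a local Pauli (this is exactly the mechanism of Corollary \ref{P-ops-max}, which gives the easy inclusion: the Pauli orbits of $\mathcal P_{\mathrm{max}}$ and $\mathcal P_{\mathrm{max}}'$ are contained in $\param^{(0,1)}$). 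So your double-coset formulation is a legitimate restatement of the conjecture, not a proof of it.

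The genuine gap is that the hard direction --- exhaustiveness --- is precisely the step you defer. You acknowledge that the "heart of the argument" is the enumeration of the double cosets, and then leave it to an unperformed Gr\"obner-basis computation or an unperformed finite case-check against the LU stabilizer of $\ket{L}$; neither the stabilizer nor the polynomial system is written down, so nothing is actually established beyond what Corollary \ref{P-ops-max} already gives. Moreover, the dimension count you invoke does not support the claim that the solution set is zero-dimensional: with a $12$-dimensional orbit, an $8$-dimensional product-state variety, and a $30$-dimensional ambient space, the expected intersection dimension is $12+8-30=-10$, i.e.\ the generic prediction is an \emph{empty} intersection. Since the intersection is in fact nonempty, it is necessarily non-transverse, and the count gives no control whatsoever on its dimension --- a priori $\param^{(0,1)}$ could contain positive-dimensional families not reachable from $\mathcal P_{\mathrm{max}}$ or $\mathcal P_{\mathrm{max}}'$ by the finite Pauli operations. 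Until the polynomial system (or the finite case-check modulo the explicit stabilizer of $\ket{L}$) is actually carried out and shown to have exactly the two claimed Pauli orbits as its solution set, the statement remains a conjecture, as the paper itself presents it.
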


      \section{All $\cnot$ circuits to reach the maximum of  $|\Delta_4|$\label{cosets}}
      We  generalize the results of the previous section by describing all the four-qubit $\cnot$ circuits that enable to produce a state maximizing $|\Delta_4|$ when they act on the LU orbit of $\ket{0000}$.
      
      \begin{prop}\label{coset-def}
        Let $i,j,k,\ell$ be distinct integers in $\{0,1,2,3\}$, then $M_{k}^{(i,j)}$ and $M_{\ell}^{(i,j)}$ define the same right coset of the subgroup $\swapg[4]$ in $\cnotg[4]$.
        This right coset is denoted by $\overline{(i,j)}$ :
        \begin{equation}
\overline{(i,j)}=\{S_{\sigma}X_{[ij]}X_{[jk]}X_{[ki]}X_{[i\ell]}X_{[\ell j]}\mid \sigma\in\symg[4]\}.
\end{equation}
        \end{prop}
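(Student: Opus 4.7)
The plan is to reduce the general assertion to the single special case $(i,j,k,\ell)=(0,1,2,3)$, which has already been handled by equation \eqref{same-coset}, using the conjugation identity \eqref{Xij-conj-sigma} to transport that relation by a relabelling of the qubits.

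First I would record a general conjugation formula: for any permutation $\sigma \in \symg[4]$ and any tuple $(i,j,k,\ell)$ of distinct indices in $\{0,1,2,3\}$,
\[
S_\sigma\, M_{k}^{(i,j)}\, S_\sigma^{-1} \;=\; M_{\sigma(k)}^{(\sigma(i),\,\sigma(j))}.
\]
This is immediate: apply \eqref{Xij-conj-sigma} to each of the five $\cnot$ factors in the product defining $M_k^{(i,j)}$, and observe that $\sigma(\ell)$ is automatically the unique remaining index, so the resulting product matches the definition of the right-hand side. Next, given distinct $i,j,k,\ell$, I would choose $\sigma \in \symg[4]$ with $\sigma(0)=i$, $\sigma(1)=j$, $\sigma(2)=k$, $\sigma(3)=\ell$. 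Two applications of the formula yield $M_k^{(i,j)} = S_\sigma\, M_2^{(0,1)}\, S_\sigma^{-1}$ and $M_\ell^{(i,j)} = S_\sigma\, M_3^{(0,1)}\, S_\sigma^{-1}$. Plugging in the already-proved relation $M_3^{(0,1)} = S_{(023)} M_2^{(0,1)}$ from \eqref{same-coset}, together with the standard cycle-conjugation identity $\sigma(023)\sigma^{-1} = (\sigma(0)\,\sigma(2)\,\sigma(3)) = (i\,k\,\ell)$, I obtain
\[
M_\ell^{(i,j)} \;=\; S_\sigma\, S_{(023)}\, M_2^{(0,1)}\, S_\sigma^{-1} \;=\; S_{(i\,k\,\ell)} \cdot M_k^{(i,j)}.
\]
Since $S_{(i\,k\,\ell)} \in \swapg[4]$, this exhibits $M_\ell^{(i,j)}$ and $M_k^{(i,j)}$ as representatives of the same right coset. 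The set description $\overline{(i,j)}=\{S_{\sigma} M_k^{(i,j)} \mid \sigma\in\symg[4]\}$ is then just the definition of the right coset $\swapg[4]\, M_k^{(i,j)}$.

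There is no genuine obstacle here: the argument is essentially bookkeeping, combining the conjugation rule \eqref{Xij-conj-sigma} with the single concrete instance \eqref{same-coset}. The only points requiring care are that $\sigma$ can actually be chosen to send the spectator indices $2,3$ to $k,\ell$ (which is automatic because $\{i,j,k,\ell\}=\{0,1,2,3\}$) and that $\sigma(023)\sigma^{-1}$ lands inside $\swapg[4]$ (which it does, since it is still a $3$-cycle).
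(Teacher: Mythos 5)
Your proof is correct and follows essentially the same route as the paper: the paper also takes $\sigma=\begin{pmatrix}0&1&2&3\\i&j&k&\ell\end{pmatrix}$ and conjugates both sides of the relation $M_{3}^{(0,1)}=S_{(023)}M_{2}^{(0,1)}$ by $S_{\sigma}$ via Identity \eqref{Xij-conj-sigma} to get $M_{\ell}^{(i,j)}=S_{(ik\ell)}M_{k}^{(i,j)}$. You merely spell out the intermediate conjugation formula and the cycle-conjugation step that the paper leaves implicit.
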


        \begin{proof}
          Let $\sigma$ be the permutation $\begin{pmatrix}0&1&2&3\\i&j&k&\ell\end{pmatrix}$. Using Identity \eqref{Xij-conj-sigma}, we conjugate each member of the equality $M_{3}^{(0,1)}=S_{(023)}M_{2}^{(0,1)}$ (Identity \eqref{same-coset}) by $S_{\sigma}$  and obtain $M_{\ell}^{(i,j)}=S_{(ik\ell)}M_{k}^{(i,j)}$.
        \end{proof}
      
      \begin{prop}
       Let $i,j,i',j'$ in $\{0,1,2,3\}$ such that $i\neq j$ and $i'\neq j'$.
        If $(i,j)$ and $(i',j')$ are distinct couples, then $\overline{(i,j)}$ and $\overline{(i',j')}$ are distinct cosets.
      \end{prop}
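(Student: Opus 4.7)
The plan is to show that the right coset $\swapg[4]\cdot g$ of an element $g\in\cnotg[4]$ is completely determined by the set of rows of the associated bit matrix in $\GL[4]$, and then to extract the pair $(i,j)$ directly from that invariant. Since multiplying a matrix on the left by a permutation matrix permutes its rows, two elements $X_A$ and $X_B$ of $\cnotg[4]$ satisfy $\swapg[4]X_A=\swapg[4]X_B$ if and only if $A$ and $B$ have the same set of rows (the rows are automatically distinct because $A,B$ are invertible). So it suffices to compare the row sets of $A_k^{(i,j)}$ and $A_{k'}^{(i',j')}$.

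Next I would compute the rows of $A_k^{(i,j)}=[ij][jk][ki][i\ell][\ell j]$ symbolically, where $\{i,j,k,\ell\}=\{0,1,2,3\}$. Starting from the canonical basis $(e_0,e_1,e_2,e_3)$ and applying the transvections from right to left as successive row operations, the rows indexed by $i,j,k,\ell$ evolve as follows: $[\ell j]$ replaces row $\ell$ by $e_\ell+e_j$; then $[i\ell]$ replaces row $i$ by $e_i+e_j+e_\ell$; then $[ki]$ replaces row $k$ by $e_i+e_j+e_k+e_\ell$; then $[jk]$ replaces row $j$ by $e_i+e_k+e_\ell$; finally $[ij]$ replaces row $i$ by $e_j+e_k$. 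The resulting set of rows of $A_k^{(i,j)}$ is
\begin{equation*}
\{\, e_j+e_k,\ e_j+e_\ell,\ e_i+e_k+e_\ell,\ e_i+e_j+e_k+e_\ell\,\},
\end{equation*}
consisting of two rows of Hamming weight $2$, one of weight $3$ and one of weight $4$, all distinct. The symmetry of this set in $k$ and $\ell$ also reconfirms Proposition \ref{coset-def}.

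Finally I would read off $(i,j)$ from this invariant set. The two weight-$2$ rows $e_j+e_k$ and $e_j+e_\ell$ have exactly one index in common, namely $j$, which pinpoints $j$. The index $i$ on the other hand does not occur in either weight-$2$ row (since $\{j,k,\ell\}$ exhausts their supports), so $i$ is uniquely characterized as the element of $\{0,1,2,3\}$ missing from both weight-$2$ rows. Hence the ordered pair $(i,j)$ can be recovered from the coset $\overline{(i,j)}$, and distinct pairs must give distinct cosets. The only delicate step I expect is the bookkeeping in the row-by-row computation above; once the four rows are in hand, the identification of $(i,j)$ from the row set is immediate.
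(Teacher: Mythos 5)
Your proof is correct, and it takes a genuinely different route from the paper: the paper simply enumerates all pairs $(i,j,k)$, $(i',j',k')$ and checks by computer (function \texttt{check\_distinct\_cosets}) that $A_{k}^{(i,j)}=\sigma A_{k'}^{(i',j')}$ forces $(i,j)=(i',j')$, whereas you give a closed-form, human-verifiable argument. Your two ingredients both check out. First, since left multiplication by a permutation matrix permutes rows, $X_A$ and $X_B$ lie in the same right coset of $\swapg[4]$ exactly when $A$ and $B$ have the same set of (necessarily distinct) rows, so the row set is a complete coset invariant. Second, your row-by-row computation of $A_k^{(i,j)}=[ij][jk][ki][i\ell][\ell j]$ is accurate: the final rows indexed by $i,j,k,\ell$ are $e_j+e_k$, $e_i+e_k+e_\ell$, $e_i+e_j+e_k+e_\ell$, $e_j+e_\ell$ respectively (this agrees with the explicit matrix $A_2^{(0,1)}$ displayed in the paper), and the weight profile $(2,2,3,4)$ lets you recover $j$ as the unique index common to the supports of the two weight-$2$ rows and $i$ as the unique index absent from both. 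Your approach buys more than the paper's: it is self-contained, it re-derives Proposition~\ref{coset-def} for free from the $k\leftrightarrow\ell$ symmetry of the row set, and it exhibits explicitly the invariant that separates the twelve cosets; the paper's computational check is shorter to state but delegates the content to the accompanying software.
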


      \begin{proof}
        We check that for any $i,j,k,i',j',k'$ in $\{0,1,2,3\}$ ($i,j,k$ distinct and $i',j',k'$ distinct), if $A_{k}^{(i,j)}=\sigma A_{k'}^{(i',j')}$ for some permutation matrix $\sigma$, then $(i,j) = (i',j')$ (function \texttt{check\_distinct\_cosets} of the Python module).
      \end{proof}

\begin{prop}
For any permutation $\sigma$, one has : $S_{\sigma}\overline{(i,j)}S_{\sigma}^{-1}=\overline{(\sigma(i),\sigma(j))}$.
  \end{prop}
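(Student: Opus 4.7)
The plan is to show set equality by rewriting the coset as a left translate of $\swapg[4]$ and exploiting the conjugation rule \eqref{Xij-conj-sigma}, which already encodes exactly the behaviour of $\cnot$ gates under $\swap$ conjugation.

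First I would rewrite the coset in the form $\overline{(i,j)}=\swapg[4]\cdot M_{k}^{(i,j)}$, valid by Proposition \ref{coset-def} for any choice of $k$ (the third index) and its complementary $\ell$. Then conjugation by $S_{\sigma}$ gives
\begin{equation*}
S_{\sigma}\,\overline{(i,j)}\,S_{\sigma}^{-1}=\bigl(S_{\sigma}\swapg[4] S_{\sigma}^{-1}\bigr)\cdot\bigl(S_{\sigma}M_{k}^{(i,j)}S_{\sigma}^{-1}\bigr).
\end{equation*}
The first factor equals $\swapg[4]$ because the swap group, being isomorphic to $\symg[4]$, is stable under conjugation by any of its own elements.

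For the second factor I would apply \eqref{Xij-conj-sigma} gate by gate. Since conjugation by $S_{\sigma}$ is a group homomorphism of $\cnotg[4]$, it distributes over the product defining $M_{k}^{(i,j)}$, yielding
\begin{equation*}
S_{\sigma}M_{k}^{(i,j)}S_{\sigma}^{-1}
=X_{[\sigma(i)\sigma(j)]}X_{[\sigma(j)\sigma(k)]}X_{[\sigma(k)\sigma(i)]}X_{[\sigma(i)\sigma(\ell)]}X_{[\sigma(\ell)\sigma(j)]}
=M_{\sigma(k)}^{(\sigma(i),\sigma(j))},
\end{equation*}
where the indices $\sigma(i),\sigma(j),\sigma(k),\sigma(\ell)$ are still four distinct elements of $\{0,1,2,3\}$ because $\sigma$ is a bijection. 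Combining the two factors gives $S_{\sigma}\overline{(i,j)}S_{\sigma}^{-1}=\swapg[4]\cdot M_{\sigma(k)}^{(\sigma(i),\sigma(j))}=\overline{(\sigma(i),\sigma(j))}$, as required.

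There is no real obstacle here; the only subtlety to watch is the homomorphism step in the second factor, which must be justified before distributing $S_{\sigma}(\cdot)S_{\sigma}^{-1}$ across the five-factor product, and the remark that $\sigma$ preserves distinctness of the four indices so the right-hand side is a legitimate $M_{\sigma(k)}^{(\sigma(i),\sigma(j))}$ and hence an element of $\overline{(\sigma(i),\sigma(j))}$.
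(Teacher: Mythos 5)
Your proposal is correct and follows essentially the same route as the paper: the key step in both is $S_{\sigma}M_{k}^{(i,j)}S_{\sigma}^{-1}=M_{\sigma(k)}^{(\sigma(i),\sigma(j))}$ obtained by applying Identity \eqref{Xij-conj-sigma} gate by gate, after which the coset identity follows from Proposition \ref{coset-def}. You merely spell out the normality of $\swapg[4]$ under conjugation by its own elements, which the paper leaves implicit.
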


  \begin{proof}
Using Identity \eqref{Xij-conj-sigma} one has $S_{\sigma}M_{k}^{(i,j)}S_{\sigma}^{-1}=M_{\sigma(k)}^{(\sigma(i),\sigma(j))}$. The result follows from Proposition \ref{coset-def}.
\end{proof}

Remark \ref{PMAX-def} can be generalized to any coset $\overline{(i,j)}$ and one can define $\param^{(i,j)}$ as being the set of all matrices $\mathcal P$ having their last column such that
  the state $M_{k}^{(i,j)}\ket{\mathcal P}$ maximizes $|\Delta_4|$.
  \begin{prop} Let $i,j$ be distinct integers in $\{0,1,2,3\}$ and $\sigma$ be a permutation such that $\sigma(0)=i$ and $\sigma(1)=j$, then :
    \begin{equation}
\param^{(i,j)}=\sigma \param^{(0,1)}.
      \end{equation}
  \end{prop}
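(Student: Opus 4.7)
The plan is to reduce the statement to the conjugation identity $S_{\sigma}M_{k}^{(0,1)}S_{\sigma}^{-1}=M_{\sigma(k)}^{(i,j)}$ established in the previous proposition, combined with the natural action of $S_\sigma$ on factorized states.

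First I would observe that because $\ket{\mathcal P}$ is a product state whose $i$-th tensor factor depends only on the $i$-th row of $\mathcal P$, the action of a $\swap$-circuit $S_\sigma$ permutes these factors and therefore simply permutes the rows of the parameter matrix. Using the convention for the permutation matrix $\sigma$ adopted in Section \ref{background} (so that left-multiplication by $\sigma$ sends row $i$ to row $\sigma(i)$), this gives the clean identity $S_{\sigma}\ket{\mathcal P}=\ket{\sigma\mathcal P}$, valid for any $\mathcal P$ whose last column vanishes (the permuted matrix still has its last column null). This is the only nontrivial observation beyond what has already been proved in the paper.

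Next, I would combine this with the conjugation formula. Given $\sigma$ with $\sigma(0)=i$ and $\sigma(1)=j$, and any $k\in\{2,3\}$, we have
\begin{equation*}
M_{\sigma(k)}^{(i,j)}\ket{\sigma\mathcal P}
=S_\sigma M_k^{(0,1)}S_\sigma^{-1}S_\sigma\ket{\mathcal P}
=S_\sigma\bigl(M_k^{(0,1)}\ket{\mathcal P}\bigr).
\end{equation*}
Since $|\Delta_4|$ is invariant under qubit permutation, $|\Delta_4(M_{\sigma(k)}^{(i,j)}\ket{\sigma\mathcal P})|=|\Delta_4(M_k^{(0,1)}\ket{\mathcal P})|$. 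Hence $M_k^{(0,1)}\ket{\mathcal P}$ is a MHS if and only if $M_{\sigma(k)}^{(i,j)}\ket{\sigma\mathcal P}$ is, and by Remark \ref{PMAX-def} (extended to $\overline{(i,j)}$) the choice of $k$ is immaterial. This shows $\mathcal P\in\param^{(0,1)}\iff \sigma\mathcal P\in\param^{(i,j)}$, giving the inclusion $\sigma\param^{(0,1)}\subseteq\param^{(i,j)}$.

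For the reverse inclusion I would apply exactly the same argument to $\sigma^{-1}$ (which sends $i\mapsto 0$ and $j\mapsto 1$), obtaining $\sigma^{-1}\param^{(i,j)}\subseteq\param^{(0,1)}$, i.e.\ $\param^{(i,j)}\subseteq\sigma\param^{(0,1)}$. The two inclusions together give the claimed equality. There is essentially no obstacle here: once the compatibility $S_\sigma\ket{\mathcal P}=\ket{\sigma\mathcal P}$ is recorded, the proposition is a direct consequence of the preceding conjugation result and the permutation-invariance of $|\Delta_4|$; the only minor care required is to ensure that the choice of $k$-index (which is irrelevant modulo $\swapg[4]$) does not affect the definition of $\param^{(i,j)}$.
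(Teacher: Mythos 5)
Your proposal is correct and follows essentially the same route as the paper's proof: both rest on the conjugation identity $S_{\sigma}M_{k}^{(0,1)}S_{\sigma}^{-1}=M_{\sigma(k)}^{(i,j)}$, the compatibility $S_{\sigma}\ket{\mathcal P}=\ket{\sigma\mathcal P}$ via Identity \eqref{conj-sigma-U}, and the permutation-invariance of $|\Delta_4|$, differing only in which inclusion is written out explicitly (the paper does $\param^{(i,j)}\subset\sigma\param^{(0,1)}$ by conjugating with $S_{\sigma}^{-1}$ and declares the other similar). Your explicit remark that row permutation preserves the null last column is a small but welcome addition the paper leaves implicit.
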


  \begin{proof}
    We prove that $\param^{(i,j)}\subset\sigma \param^{(0,1)}$, the other inclusion being similar. Let $\mathcal P$ be a matrix of parameters in $\param^{(i,j)}$. Then $M_{k}^{(i,j)}\ket{\mathcal P}=\ket{\psi}$ and $\ket{\psi}$ maximizes $|\Delta_4|$, so 
    $S_{\sigma}^{-1}\ket{\psi}=S_{\sigma}^{-1}M_{k}^{(i,j)}\ket{\mathcal P}=S_{\sigma}^{-1}M_{k}^{(i,j)}S_{\sigma}S_{\sigma}^{-1}\ket{\mathcal P}=M_{k'}^{(0,1)}S_{\sigma}^{-1}\ket{\mathcal P}$, where $k'\in\{2,3\}$. We observe that $S_{\sigma}^{-1}\ket{\mathcal P}$ can be rewritten as $\ket{\sigma^{-1}\mathcal P}$ by using Identity \eqref{conj-sigma-U},
    hence $S_{\sigma}^{-1}\ket{\psi}=M_{k'}^{(0,1)}\ket{\sigma^{-1}\mathcal P}$.     
    As $S_{\sigma}^{-1}\ket{\psi}$ maximizes $|\Delta_4|$, there exists $\mathcal P'$ in $\param^{(0,1)}$ such that $\sigma^{-1}\mathcal P=\mathcal P'$, so $\mathcal P=\sigma \mathcal P'$ and we deduce that $\param^{(i,j)}\subset\sigma \param^{(0,1)}$.
    \end{proof}

    Our numerical results based on the use of the random walk heuristic suggest the following conjecture.
    \begin{conj}
      Any four-qubit $\cnot$ circuit capable of maximizing $|\Delta_4|$ by acting on the LU orbit of $\ket{0000}$ belongs to one of the 12 cosets $\overline{(i,j)}$.
      \end{conj}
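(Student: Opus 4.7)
The plan is to combine Chen and Djokovic's uniqueness theorem with a symmetry reduction and a case analysis on the remaining cosets. By \cite{2013CD}, the MHS is unique up to local unitary operations; hence $X_A\ket{\mathcal P}$ is an MHS if and only if there exist single-qubit unitaries $V_0,\dots,V_3$ such that $X_A\ket{\mathcal P}\simeq (V_0\otimes V_1\otimes V_2\otimes V_3)\ket L$, equivalently, $X_A^{-1}(V_0\otimes\cdots\otimes V_3)\ket L$ must be a fully factorized state. The conjecture thus becomes: for every coset $\overline A$ outside the 12 distinguished cosets $\overline{(i,j)}$, no local unitary sends $\ket L$ into $X_A\cdot(\text{product states})$.

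The first reduction is symmetry. The $\symg[4]$-conjugation action $A\mapsto \sigma A\sigma^{-1}$ sends right cosets of $\swapg[4]$ to right cosets, and by Identity \eqref{Xij-conj-sigma} together with the invariance of $|\Delta_4|$ under qubit permutation, the property of a coset achieving the maximum is preserved: if $X_A\ket{\mathcal P_0}$ is an MHS, then $X_{\sigma A\sigma^{-1}}\ket{\mathcal P}$ is also an MHS for a suitably permuted $\ket{\mathcal P}$. Thus the good cosets form a union of $\symg[4]$-orbits (consistent with the 12 cosets $\overline{(i,j)}$ forming a single orbit), and it suffices to rule out one representative per orbit among the $333-12=321$ non-distinguished cosets.

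Next, for each orbit representative $A$, the task is to show that the polynomial function $\mathcal P\mapsto|\Delta_4(X_A\ket{\mathcal P})|^2$, viewed on the compact semi-algebraic set defined by the four single-qubit unit-norm constraints on $(a_0,a_1,\dots,d_0,d_1)$, has maximum strictly below $(2^{8}3^{9})^{-2}$. Two complementary tools are available. First, an LU-invariant screening: any MHS has all single-qubit reduced density matrices equal to $\tfrac12 I$ (a necessary but non-sufficient condition that already disqualifies many cosets), and the Luque--Thibon fundamental SLOCC invariants $H,L,M,D_{xy}$ from \cite{2003LT} take prescribed values on $\ket L$, so checking whether those values are simultaneously attainable along the parametric family $X_A\ket{\mathcal P}$ reduces to a finite algebraic test. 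Second, for cosets surviving the screening, one would attack the critical-point system of $|\Delta_4(X_A\ket{\mathcal P})|^2$ via Lagrange multipliers, ideally with a symbolic Groebner-basis computation, falling back on sum-of-squares (Lasserre) certificates at modest relaxation degree to produce a rigorous strict upper bound.

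The main obstacle is computational exhaustiveness. Even after the $\symg[4]$ reduction, several dozen inequivalent cosets remain, and each gives a non-trivial 8-parameter polynomial optimization whose critical locus is typically too large for naive Groebner bases. A clean proof will likely require a structural invariant-theoretic argument that characterizes the good cosets directly in terms of intrinsic properties of the bit matrix $A\in\GL[4]$ (its rank pattern, the combinatorics of its tensor contractions against $\ket L$, or a canonical form under the $\symg[4]$-conjugation action) so that the ``triangle-plus-edge'' form $[ij][jk][ki][i\ell][\ell j]$ of the representatives $A_k^{(i,j)}$ appears as a forced consequence rather than an observation. Designing such an a priori characterization — and thereby avoiding the brute-force enumeration of the $\symg[4]$-orbits — is where the real conceptual work lies.
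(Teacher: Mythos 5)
The statement you are asked to prove is, in the paper, explicitly a \emph{conjecture}: the author offers no proof, only the numerical evidence that a random-walk heuristic over the 333 coset representatives on which $\Delta_4$ does not identically vanish reached the maximal value $\frac{1}{2^83^9}$ for exactly 12 of them. So there is no paper proof to compare against, and your proposal must be judged on its own terms; on those terms it is a research plan, not a proof. The two reductions you do carry out are sound: by the Chen--Djokovic uniqueness theorem the problem is equivalent to deciding, coset by coset, whether some local unitary image of $\ket{L}$ lies in $X_A\cdot(\text{product states})$, and the $\symg[4]$-conjugation symmetry (together with the fact that right multiplication by $S_\sigma$ merely permutes the rows of $\mathcal P$) legitimately cuts the $333-12=321$ remaining cosets down to a list of orbit representatives. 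These observations are consistent with, and slightly sharpen, the paper's setup.

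The gap is that nothing after the reduction is executed. Your screening conditions (maximally mixed one-qubit marginals, prescribed values of the Luque--Thibon invariants on $\ket{L}$) are necessary but not sufficient, you do not show that they eliminate even one concrete coset, and some cosets will inevitably survive them. For the survivors you defer to Groebner-basis or sum-of-squares certificates whose feasibility you yourself doubt, and you exhibit none. A genuine proof must either complete that finite but heavy optimization for every surviving orbit representative with exact or rigorously certified arithmetic, or supply the a priori characterization of the good matrices $A\in\GL[4]$ that you correctly flag as the missing conceptual ingredient. Until one of those is done, the conjecture remains exactly that, and your proposal does not close it.
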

      \begin{example}Let $\sigma=(013)$.
        The unitary $S_{\sigma}M_{2}^{(0,1)}$ acting on the state $\ket{\mathcal P_{\mathrm{max}}}$ generates the state $S_{(013)}\ket{\psi_{\mathrm{max}}}$ that maximizes $|\Delta_4|$.
        This state  can be produced by the unitary $M_{\sigma(2)}^{(\sigma(0),\sigma(1))}=M_{2}^{(1,3)}$ acting on the state $\ket{\sigma\mathcal P_{\mathrm{max}}}$, where
        $\sigma\mathcal P_{\mathrm{max}}=\sigma\begin{bmatrix}\pi/2&\pi/2&0\\
          \pi/2&\pi/2&0\\
          \pi/4&\cos^{-1}\frac{\sqrt 3}{3}&0\\
          \pi/4&\cos^{-1}\frac{\sqrt 3}{3}&0\end{bmatrix}=
        \begin{bmatrix}  \pi/4&\cos^{-1}\frac{\sqrt 3}{3}&0\\
          \pi/2&\pi/2&0\\
          \pi/4&\cos^{-1}\frac{\sqrt 3}{3}&0\\
          \pi/2&\pi/2&0              
        \end{bmatrix}$.
      \end{example}

\section{Circuits generating the states $\ket{L}$, $\ket{\Phi_5}$ and $\ket{M_{2222}}$\label{generation}}

Let $\ket{\psi}$ be a state in the set $\{\ket{L}, \ket{\Phi_5}, \ket{M_{2222}}\}$. Following the Gour-Wallach conjecture \cite{2012GW} proved by Chen and Djokovic \cite{2013CD}, there exists a matrix $\mathcal P$ of parameters and a phase $\phi$ such that
\begin{equation}
  \ket{\psi}=\ee^{\ii\phi}U(\mathcal{P})\ket{\psi_{\mathrm{max}}}.\label{LU-equation}
\end{equation}
In practice, one has to solve a 16 equations non linear system but its resolution seems to be out of reach of current equation solvers (we used Maple and Python SymPy solvers). However, one can turn the problem of finding a solution of \eqref{LU-equation} into an optimization problem thanks to this simple remark : $\ket{\psi}=\ee^{\ii\phi}U(\mathcal{P})\ket{\psi_{\mathrm{max}}}$ if and only if the sum of the absolute values of the 16 coordinates of $\ket{\psi}-\ee^{\ii\phi}U(\mathcal{P})\ket{L}$ vanishes. Again, we use a random walk on a search space of 13 parameters (the 12 parameters of $\mathcal P$ plus the phase $\phi$) to minimize this sum and obtain an approximate solution of the system (function \texttt{search\_LU\_from\_state1\_to\_state2} of the Python module). From this approximate solution, it is possible to deduce an exact solution. The results are summarized in Figure \ref{psi-to-L}. Finally, by combining these results with those of Proposition \ref{generate-psi}, it is possible to propose simple quantum circuits generating the states $\ket{L}$, $\ket{\Phi_5}$ and $\ket{M_{2222}}$
(Figure \ref{circuits}, function \texttt{check\_circuits} of the Python module).

\begin{figure}

 $\ket{L}=\ee^{-\ii\frac{11\pi}{12}}U(\mathcal P_{\psi \rightarrow L})\ket{\psi_{\mathrm{max}}}$, where $\mathcal P_{\psi \rightarrow L}=
\begin{bmatrix}
  \pi&\pi/2&-\pi/2\\
  \pi/2&-\pi/2&\pi\\
  0&\pi&\pi\\
  0&-\pi/2&\pi/2
\end{bmatrix}$\medskip

 $\ket{\Phi_5}=\ee^{-\ii\frac{7\pi}{12}}U(\mathcal P_{\psi \rightarrow \Phi_5})\ket{\psi_{\mathrm{max}}}$, where $\mathcal P_{\psi \rightarrow \Phi_5}=
\begin{bmatrix}
  -\pi/3&\theta-\pi&-3\pi/4\\
  \pi/3&\theta&3\pi/4\\
  \pi&\theta&3\pi/4\\
  2\pi/3&\pi-\theta&\pi/4
\end{bmatrix}$

and $\theta=\cos^{-1}\frac{\sqrt 3}{3}$ \medskip


$\ket{M_{2222}}=\ee^{\ii\frac{5\pi}{12}}U(\mathcal P_{\psi \rightarrow M_{2222}})\ket{\psi_{\mathrm{max}}}$, where $\mathcal P_{\psi \rightarrow M_{2222}}=
\begin{bmatrix}
  \pi/2&\pi/4&0\\
  -\pi/2&-\pi/4&\pi/2\\
  0&-\pi/2&\pi/4\\
  \pi/2&3\pi/4&\pi
\end{bmatrix}$

\caption{LU operators generating the states $\ket{L}$, $\ket{\Phi_5}$ and $\ket{M_{2222}}$, from the state $\ket{\psi_{\mathrm{max}}}$.\label{psi-to-L}}
\end{figure}

\begin{figure}
  \vspace*{3mm}
  
  \begin{center}
  \fbox{$\theta=\cos^{-1}\frac{\sqrt 3}{3}$}
\end{center}
  
\raisebox{15mm}{\hspace{2.5mm}$\ee^{-\ii \frac{7\pi}{12}}\ket{L}=$}
\begin{tikzpicture}[scale=1.1600000,x=1pt,y=1pt]
\filldraw[color=white] (0.000000, -9.500000) rectangle (224.000000, 66.500000);
\draw[color=black] (0.000000,57.000000) -- (224.000000,57.000000);
\draw[color=black] (0.000000,57.000000) node[left] {$\ket{0}$};
\draw[color=black] (0.000000,38.000000) -- (224.000000,38.000000);
\draw[color=black] (0.000000,38.000000) node[left] {$\ket{0}$};
\draw[color=black] (0.000000,19.000000) -- (224.000000,19.000000);
\draw[color=black] (0.000000,19.000000) node[left] {$\ket{0}$};
\draw[color=black] (0.000000,0.000000) -- (224.000000,0.000000);
\draw[color=black] (0.000000,0.000000) node[left] {$\ket{0}$};
\begin{scope}
\draw[fill=white] (19.000000, 57.000000) +(-45.000000:8.485281pt and 8.485281pt) -- +(45.000000:8.485281pt and 8.485281pt) -- +(135.000000:8.485281pt and 8.485281pt) -- +(225.000000:8.485281pt and 8.485281pt) -- cycle;
\clip (19.000000, 57.000000) +(-45.000000:8.485281pt and 8.485281pt) -- +(45.000000:8.485281pt and 8.485281pt) -- +(135.000000:8.485281pt and 8.485281pt) -- +(225.000000:8.485281pt and 8.485281pt) -- cycle;
\draw (19.000000, 57.000000) node {$H$};
\end{scope}
\begin{scope}
\draw[fill=white] (19.000000, 38.000000) +(-45.000000:8.485281pt and 8.485281pt) -- +(45.000000:8.485281pt and 8.485281pt) -- +(135.000000:8.485281pt and 8.485281pt) -- +(225.000000:8.485281pt and 8.485281pt) -- cycle;
\clip (19.000000, 38.000000) +(-45.000000:8.485281pt and 8.485281pt) -- +(45.000000:8.485281pt and 8.485281pt) -- +(135.000000:8.485281pt and 8.485281pt) -- +(225.000000:8.485281pt and 8.485281pt) -- cycle;
\draw (19.000000, 38.000000) node {$H$};
\end{scope}
\begin{scope}
\draw[fill=white] (19.000000, 19.000000) +(-45.000000:18.384776pt and 10.606602pt) -- +(45.000000:18.384776pt and 10.606602pt) -- +(135.000000:18.384776pt and 10.606602pt) -- +(225.000000:18.384776pt and 10.606602pt) -- cycle;
\clip (19.000000, 19.000000) +(-45.000000:18.384776pt and 10.606602pt) -- +(45.000000:18.384776pt and 10.606602pt) -- +(135.000000:18.384776pt and 10.606602pt) -- +(225.000000:18.384776pt and 10.606602pt) -- cycle;
\draw (19.000000, 19.000000) node {$R_y(\theta)$};
\end{scope}
\begin{scope}
\draw[fill=white] (19.000000, -0.000000) +(-45.000000:18.384776pt and 10.606602pt) -- +(45.000000:18.384776pt and 10.606602pt) -- +(135.000000:18.384776pt and 10.606602pt) -- +(225.000000:18.384776pt and 10.606602pt) -- cycle;
\clip (19.000000, -0.000000) +(-45.000000:18.384776pt and 10.606602pt) -- +(45.000000:18.384776pt and 10.606602pt) -- +(135.000000:18.384776pt and 10.606602pt) -- +(225.000000:18.384776pt and 10.606602pt) -- cycle;
\draw (19.000000, -0.000000) node {$R_y(\theta)$};
\end{scope}
\begin{scope}
\draw[fill=white] (50.000000, 57.000000) +(-45.000000:8.485281pt and 8.485281pt) -- +(45.000000:8.485281pt and 8.485281pt) -- +(135.000000:8.485281pt and 8.485281pt) -- +(225.000000:8.485281pt and 8.485281pt) -- cycle;
\clip (50.000000, 57.000000) +(-45.000000:8.485281pt and 8.485281pt) -- +(45.000000:8.485281pt and 8.485281pt) -- +(135.000000:8.485281pt and 8.485281pt) -- +(225.000000:8.485281pt and 8.485281pt) -- cycle;
\draw (50.000000, 57.000000) node {$P$};
\end{scope}
\begin{scope}
\draw[fill=white] (50.000000, 38.000000) +(-45.000000:8.485281pt and 8.485281pt) -- +(45.000000:8.485281pt and 8.485281pt) -- +(135.000000:8.485281pt and 8.485281pt) -- +(225.000000:8.485281pt and 8.485281pt) -- cycle;
\clip (50.000000, 38.000000) +(-45.000000:8.485281pt and 8.485281pt) -- +(45.000000:8.485281pt and 8.485281pt) -- +(135.000000:8.485281pt and 8.485281pt) -- +(225.000000:8.485281pt and 8.485281pt) -- cycle;
\draw (50.000000, 38.000000) node {$P$};
\end{scope}
\begin{scope}
\draw[fill=white] (50.000000, 19.000000) +(-45.000000:8.485281pt and 8.485281pt) -- +(45.000000:8.485281pt and 8.485281pt) -- +(135.000000:8.485281pt and 8.485281pt) -- +(225.000000:8.485281pt and 8.485281pt) -- cycle;
\clip (50.000000, 19.000000) +(-45.000000:8.485281pt and 8.485281pt) -- +(45.000000:8.485281pt and 8.485281pt) -- +(135.000000:8.485281pt and 8.485281pt) -- +(225.000000:8.485281pt and 8.485281pt) -- cycle;
\draw (50.000000, 19.000000) node {$T$};
\end{scope}
\begin{scope}
\draw[fill=white] (50.000000, -0.000000) +(-45.000000:8.485281pt and 8.485281pt) -- +(45.000000:8.485281pt and 8.485281pt) -- +(135.000000:8.485281pt and 8.485281pt) -- +(225.000000:8.485281pt and 8.485281pt) -- cycle;
\clip (50.000000, -0.000000) +(-45.000000:8.485281pt and 8.485281pt) -- +(45.000000:8.485281pt and 8.485281pt) -- +(135.000000:8.485281pt and 8.485281pt) -- +(225.000000:8.485281pt and 8.485281pt) -- cycle;
\draw (50.000000, -0.000000) node {$T$};
\end{scope}
\draw (71.000000,38.000000) -- (71.000000,0.000000);
\begin{scope}
\draw[fill=white] (71.000000, 0.000000) circle(3.000000pt);
\clip (71.000000, 0.000000) circle(3.000000pt);
\draw (68.000000, 0.000000) -- (74.000000, 0.000000);
\draw (71.000000, -3.000000) -- (71.000000, 3.000000);
\end{scope}
\filldraw (71.000000, 38.000000) circle(1.500000pt);
\draw (89.000000,57.000000) -- (89.000000,0.000000);
\begin{scope}
\draw[fill=white] (89.000000, 57.000000) circle(3.000000pt);
\clip (89.000000, 57.000000) circle(3.000000pt);
\draw (86.000000, 57.000000) -- (92.000000, 57.000000);
\draw (89.000000, 54.000000) -- (89.000000, 60.000000);
\end{scope}
\filldraw (89.000000, 0.000000) circle(1.500000pt);
\draw (107.000000,57.000000) -- (107.000000,19.000000);
\begin{scope}
\draw[fill=white] (107.000000, 19.000000) circle(3.000000pt);
\clip (107.000000, 19.000000) circle(3.000000pt);
\draw (104.000000, 19.000000) -- (110.000000, 19.000000);
\draw (107.000000, 16.000000) -- (107.000000, 22.000000);
\end{scope}
\filldraw (107.000000, 57.000000) circle(1.500000pt);
\draw (125.000000,38.000000) -- (125.000000,19.000000);
\begin{scope}
\draw[fill=white] (125.000000, 38.000000) circle(3.000000pt);
\clip (125.000000, 38.000000) circle(3.000000pt);
\draw (122.000000, 38.000000) -- (128.000000, 38.000000);
\draw (125.000000, 35.000000) -- (125.000000, 41.000000);
\end{scope}
\filldraw (125.000000, 19.000000) circle(1.500000pt);
\draw (143.000000,57.000000) -- (143.000000,38.000000);
\begin{scope}
\draw[fill=white] (143.000000, 57.000000) circle(3.000000pt);
\clip (143.000000, 57.000000) circle(3.000000pt);
\draw (140.000000, 57.000000) -- (146.000000, 57.000000);
\draw (143.000000, 54.000000) -- (143.000000, 60.000000);
\end{scope}
\filldraw (143.000000, 38.000000) circle(1.500000pt);
\begin{scope}
\draw[fill=white] (164.000000, 57.000000) +(-45.000000:8.485281pt and 8.485281pt) -- +(45.000000:8.485281pt and 8.485281pt) -- +(135.000000:8.485281pt and 8.485281pt) -- +(225.000000:8.485281pt and 8.485281pt) -- cycle;
\clip (164.000000, 57.000000) +(-45.000000:8.485281pt and 8.485281pt) -- +(45.000000:8.485281pt and 8.485281pt) -- +(135.000000:8.485281pt and 8.485281pt) -- +(225.000000:8.485281pt and 8.485281pt) -- cycle;
\draw (164.000000, 57.000000) node {$P$};
\end{scope}
\begin{scope}
\draw[fill=white] (164.000000, 38.000000) +(-45.000000:8.485281pt and 8.485281pt) -- +(45.000000:8.485281pt and 8.485281pt) -- +(135.000000:8.485281pt and 8.485281pt) -- +(225.000000:8.485281pt and 8.485281pt) -- cycle;
\clip (164.000000, 38.000000) +(-45.000000:8.485281pt and 8.485281pt) -- +(45.000000:8.485281pt and 8.485281pt) -- +(135.000000:8.485281pt and 8.485281pt) -- +(225.000000:8.485281pt and 8.485281pt) -- cycle;
\draw (164.000000, 38.000000) node {$Z$};
\end{scope}
\begin{scope}
\draw[fill=white] (164.000000, 19.000000) +(-45.000000:8.485281pt and 8.485281pt) -- +(45.000000:8.485281pt and 8.485281pt) -- +(135.000000:8.485281pt and 8.485281pt) -- +(225.000000:8.485281pt and 8.485281pt) -- cycle;
\clip (164.000000, 19.000000) +(-45.000000:8.485281pt and 8.485281pt) -- +(45.000000:8.485281pt and 8.485281pt) -- +(135.000000:8.485281pt and 8.485281pt) -- +(225.000000:8.485281pt and 8.485281pt) -- cycle;
\draw (164.000000, 19.000000) node {$X$};
\end{scope}
\begin{scope}
\draw[fill=white] (164.000000, -0.000000) +(-45.000000:8.485281pt and 8.485281pt) -- +(45.000000:8.485281pt and 8.485281pt) -- +(135.000000:8.485281pt and 8.485281pt) -- +(225.000000:8.485281pt and 8.485281pt) -- cycle;
\clip (164.000000, -0.000000) +(-45.000000:8.485281pt and 8.485281pt) -- +(45.000000:8.485281pt and 8.485281pt) -- +(135.000000:8.485281pt and 8.485281pt) -- +(225.000000:8.485281pt and 8.485281pt) -- cycle;
\draw (164.000000, -0.000000) node {$P$};
\end{scope}
\begin{scope}
\draw[fill=white] (188.000000, 57.000000) +(-45.000000:8.485281pt and 8.485281pt) -- +(45.000000:8.485281pt and 8.485281pt) -- +(135.000000:8.485281pt and 8.485281pt) -- +(225.000000:8.485281pt and 8.485281pt) -- cycle;
\clip (188.000000, 57.000000) +(-45.000000:8.485281pt and 8.485281pt) -- +(45.000000:8.485281pt and 8.485281pt) -- +(135.000000:8.485281pt and 8.485281pt) -- +(225.000000:8.485281pt and 8.485281pt) -- cycle;
\draw (188.000000, 57.000000) node {$H$};
\end{scope}
\begin{scope}
\draw[fill=white] (188.000000, 38.000000) +(-45.000000:8.485281pt and 8.485281pt) -- +(45.000000:8.485281pt and 8.485281pt) -- +(135.000000:8.485281pt and 8.485281pt) -- +(225.000000:8.485281pt and 8.485281pt) -- cycle;
\clip (188.000000, 38.000000) +(-45.000000:8.485281pt and 8.485281pt) -- +(45.000000:8.485281pt and 8.485281pt) -- +(135.000000:8.485281pt and 8.485281pt) -- +(225.000000:8.485281pt and 8.485281pt) -- cycle;
\draw (188.000000, 38.000000) node {$H$};
\end{scope}
\begin{scope}
\draw[fill=white] (188.000000, -0.000000) +(-45.000000:8.485281pt and 8.485281pt) -- +(45.000000:8.485281pt and 8.485281pt) -- +(135.000000:8.485281pt and 8.485281pt) -- +(225.000000:8.485281pt and 8.485281pt) -- cycle;
\clip (188.000000, -0.000000) +(-45.000000:8.485281pt and 8.485281pt) -- +(45.000000:8.485281pt and 8.485281pt) -- +(135.000000:8.485281pt and 8.485281pt) -- +(225.000000:8.485281pt and 8.485281pt) -- cycle;
\draw (188.000000, -0.000000) node {$H$};
\end{scope}
\begin{scope}
\draw[fill=white] (212.000000, 57.000000) +(-45.000000:8.485281pt and 8.485281pt) -- +(45.000000:8.485281pt and 8.485281pt) -- +(135.000000:8.485281pt and 8.485281pt) -- +(225.000000:8.485281pt and 8.485281pt) -- cycle;
\clip (212.000000, 57.000000) +(-45.000000:8.485281pt and 8.485281pt) -- +(45.000000:8.485281pt and 8.485281pt) -- +(135.000000:8.485281pt and 8.485281pt) -- +(225.000000:8.485281pt and 8.485281pt) -- cycle;
\draw (212.000000, 57.000000) node {$Z$};
\end{scope}
\begin{scope}
\draw[fill=white] (212.000000, 38.000000) +(-45.000000:8.485281pt and 8.485281pt) -- +(45.000000:8.485281pt and 8.485281pt) -- +(135.000000:8.485281pt and 8.485281pt) -- +(225.000000:8.485281pt and 8.485281pt) -- cycle;
\clip (212.000000, 38.000000) +(-45.000000:8.485281pt and 8.485281pt) -- +(45.000000:8.485281pt and 8.485281pt) -- +(135.000000:8.485281pt and 8.485281pt) -- +(225.000000:8.485281pt and 8.485281pt) -- cycle;
\draw (212.000000, 38.000000) node {$P^{\dag}$};
\end{scope}
\begin{scope}
\draw[fill=white] (212.000000, -0.000000) +(-45.000000:8.485281pt and 8.485281pt) -- +(45.000000:8.485281pt and 8.485281pt) -- +(135.000000:8.485281pt and 8.485281pt) -- +(225.000000:8.485281pt and 8.485281pt) -- cycle;
\clip (212.000000, -0.000000) +(-45.000000:8.485281pt and 8.485281pt) -- +(45.000000:8.485281pt and 8.485281pt) -- +(135.000000:8.485281pt and 8.485281pt) -- +(225.000000:8.485281pt and 8.485281pt) -- cycle;
\draw (212.000000, -0.000000) node {$Z$};
\end{scope}
\end{tikzpicture}

\begin{center}
\raisebox{15mm}{$\ee^{-\ii \frac{\pi}{6}}\ket{\Phi_5}=$}
\begin{tikzpicture}[scale=1.1600000,x=1pt,y=1pt]
\filldraw[color=white] (0.000000, -9.500000) rectangle (291.000000, 66.500000);
\draw[color=black] (0.000000,57.000000) -- (291.000000,57.000000);
\draw[color=black] (0.000000,57.000000) node[left] {$\ket{0}$};
\draw[color=black] (0.000000,38.000000) -- (291.000000,38.000000);
\draw[color=black] (0.000000,38.000000) node[left] {$\ket{0}$};
\draw[color=black] (0.000000,19.000000) -- (291.000000,19.000000);
\draw[color=black] (0.000000,19.000000) node[left] {$\ket{0}$};
\draw[color=black] (0.000000,0.000000) -- (291.000000,0.000000);
\draw[color=black] (0.000000,0.000000) node[left] {$\ket{0}$};
\begin{scope}
\draw[fill=white] (19.000000, 57.000000) +(-45.000000:8.485281pt and 8.485281pt) -- +(45.000000:8.485281pt and 8.485281pt) -- +(135.000000:8.485281pt and 8.485281pt) -- +(225.000000:8.485281pt and 8.485281pt) -- cycle;
\clip (19.000000, 57.000000) +(-45.000000:8.485281pt and 8.485281pt) -- +(45.000000:8.485281pt and 8.485281pt) -- +(135.000000:8.485281pt and 8.485281pt) -- +(225.000000:8.485281pt and 8.485281pt) -- cycle;
\draw (19.000000, 57.000000) node {$H$};
\end{scope}
\begin{scope}
\draw[fill=white] (19.000000, 38.000000) +(-45.000000:8.485281pt and 8.485281pt) -- +(45.000000:8.485281pt and 8.485281pt) -- +(135.000000:8.485281pt and 8.485281pt) -- +(225.000000:8.485281pt and 8.485281pt) -- cycle;
\clip (19.000000, 38.000000) +(-45.000000:8.485281pt and 8.485281pt) -- +(45.000000:8.485281pt and 8.485281pt) -- +(135.000000:8.485281pt and 8.485281pt) -- +(225.000000:8.485281pt and 8.485281pt) -- cycle;
\draw (19.000000, 38.000000) node {$H$};
\end{scope}
\begin{scope}
\draw[fill=white] (19.000000, 19.000000) +(-45.000000:18.384776pt and 10.606602pt) -- +(45.000000:18.384776pt and 10.606602pt) -- +(135.000000:18.384776pt and 10.606602pt) -- +(225.000000:18.384776pt and 10.606602pt) -- cycle;
\clip (19.000000, 19.000000) +(-45.000000:18.384776pt and 10.606602pt) -- +(45.000000:18.384776pt and 10.606602pt) -- +(135.000000:18.384776pt and 10.606602pt) -- +(225.000000:18.384776pt and 10.606602pt) -- cycle;
\draw (19.000000, 19.000000) node {$R_y(\theta)$};
\end{scope}
\begin{scope}
\draw[fill=white] (19.000000, -0.000000) +(-45.000000:18.384776pt and 10.606602pt) -- +(45.000000:18.384776pt and 10.606602pt) -- +(135.000000:18.384776pt and 10.606602pt) -- +(225.000000:18.384776pt and 10.606602pt) -- cycle;
\clip (19.000000, -0.000000) +(-45.000000:18.384776pt and 10.606602pt) -- +(45.000000:18.384776pt and 10.606602pt) -- +(135.000000:18.384776pt and 10.606602pt) -- +(225.000000:18.384776pt and 10.606602pt) -- cycle;
\draw (19.000000, -0.000000) node {$R_y(\theta)$};
\end{scope}
\begin{scope}
\draw[fill=white] (50.000000, 57.000000) +(-45.000000:8.485281pt and 8.485281pt) -- +(45.000000:8.485281pt and 8.485281pt) -- +(135.000000:8.485281pt and 8.485281pt) -- +(225.000000:8.485281pt and 8.485281pt) -- cycle;
\clip (50.000000, 57.000000) +(-45.000000:8.485281pt and 8.485281pt) -- +(45.000000:8.485281pt and 8.485281pt) -- +(135.000000:8.485281pt and 8.485281pt) -- +(225.000000:8.485281pt and 8.485281pt) -- cycle;
\draw (50.000000, 57.000000) node {$P$};
\end{scope}
\begin{scope}
\draw[fill=white] (50.000000, 38.000000) +(-45.000000:8.485281pt and 8.485281pt) -- +(45.000000:8.485281pt and 8.485281pt) -- +(135.000000:8.485281pt and 8.485281pt) -- +(225.000000:8.485281pt and 8.485281pt) -- cycle;
\clip (50.000000, 38.000000) +(-45.000000:8.485281pt and 8.485281pt) -- +(45.000000:8.485281pt and 8.485281pt) -- +(135.000000:8.485281pt and 8.485281pt) -- +(225.000000:8.485281pt and 8.485281pt) -- cycle;
\draw (50.000000, 38.000000) node {$P$};
\end{scope}
\begin{scope}
\draw[fill=white] (50.000000, 19.000000) +(-45.000000:8.485281pt and 8.485281pt) -- +(45.000000:8.485281pt and 8.485281pt) -- +(135.000000:8.485281pt and 8.485281pt) -- +(225.000000:8.485281pt and 8.485281pt) -- cycle;
\clip (50.000000, 19.000000) +(-45.000000:8.485281pt and 8.485281pt) -- +(45.000000:8.485281pt and 8.485281pt) -- +(135.000000:8.485281pt and 8.485281pt) -- +(225.000000:8.485281pt and 8.485281pt) -- cycle;
\draw (50.000000, 19.000000) node {$T$};
\end{scope}
\begin{scope}
\draw[fill=white] (50.000000, -0.000000) +(-45.000000:8.485281pt and 8.485281pt) -- +(45.000000:8.485281pt and 8.485281pt) -- +(135.000000:8.485281pt and 8.485281pt) -- +(225.000000:8.485281pt and 8.485281pt) -- cycle;
\clip (50.000000, -0.000000) +(-45.000000:8.485281pt and 8.485281pt) -- +(45.000000:8.485281pt and 8.485281pt) -- +(135.000000:8.485281pt and 8.485281pt) -- +(225.000000:8.485281pt and 8.485281pt) -- cycle;
\draw (50.000000, -0.000000) node {$T$};
\end{scope}
\draw (71.000000,38.000000) -- (71.000000,0.000000);
\begin{scope}
\draw[fill=white] (71.000000, 0.000000) circle(3.000000pt);
\clip (71.000000, 0.000000) circle(3.000000pt);
\draw (68.000000, 0.000000) -- (74.000000, 0.000000);
\draw (71.000000, -3.000000) -- (71.000000, 3.000000);
\end{scope}
\filldraw (71.000000, 38.000000) circle(1.500000pt);
\draw (89.000000,57.000000) -- (89.000000,0.000000);
\begin{scope}
\draw[fill=white] (89.000000, 57.000000) circle(3.000000pt);
\clip (89.000000, 57.000000) circle(3.000000pt);
\draw (86.000000, 57.000000) -- (92.000000, 57.000000);
\draw (89.000000, 54.000000) -- (89.000000, 60.000000);
\end{scope}
\filldraw (89.000000, 0.000000) circle(1.500000pt);
\draw (107.000000,57.000000) -- (107.000000,19.000000);
\begin{scope}
\draw[fill=white] (107.000000, 19.000000) circle(3.000000pt);
\clip (107.000000, 19.000000) circle(3.000000pt);
\draw (104.000000, 19.000000) -- (110.000000, 19.000000);
\draw (107.000000, 16.000000) -- (107.000000, 22.000000);
\end{scope}
\filldraw (107.000000, 57.000000) circle(1.500000pt);
\draw (125.000000,38.000000) -- (125.000000,19.000000);
\begin{scope}
\draw[fill=white] (125.000000, 38.000000) circle(3.000000pt);
\clip (125.000000, 38.000000) circle(3.000000pt);
\draw (122.000000, 38.000000) -- (128.000000, 38.000000);
\draw (125.000000, 35.000000) -- (125.000000, 41.000000);
\end{scope}
\filldraw (125.000000, 19.000000) circle(1.500000pt);
\draw (143.000000,57.000000) -- (143.000000,38.000000);
\begin{scope}
\draw[fill=white] (143.000000, 57.000000) circle(3.000000pt);
\clip (143.000000, 57.000000) circle(3.000000pt);
\draw (140.000000, 57.000000) -- (146.000000, 57.000000);
\draw (143.000000, 54.000000) -- (143.000000, 60.000000);
\end{scope}
\filldraw (143.000000, 38.000000) circle(1.500000pt);
\begin{scope}
\draw[fill=white] (164.000000, 57.000000) +(-45.000000:8.485281pt and 8.485281pt) -- +(45.000000:8.485281pt and 8.485281pt) -- +(135.000000:8.485281pt and 8.485281pt) -- +(225.000000:8.485281pt and 8.485281pt) -- cycle;
\clip (164.000000, 57.000000) +(-45.000000:8.485281pt and 8.485281pt) -- +(45.000000:8.485281pt and 8.485281pt) -- +(135.000000:8.485281pt and 8.485281pt) -- +(225.000000:8.485281pt and 8.485281pt) -- cycle;
\draw (164.000000, 57.000000) node {$T$};
\end{scope}
\begin{scope}
\draw[fill=white] (164.000000, 38.000000) +(-45.000000:8.485281pt and 8.485281pt) -- +(45.000000:8.485281pt and 8.485281pt) -- +(135.000000:8.485281pt and 8.485281pt) -- +(225.000000:8.485281pt and 8.485281pt) -- cycle;
\clip (164.000000, 38.000000) +(-45.000000:8.485281pt and 8.485281pt) -- +(45.000000:8.485281pt and 8.485281pt) -- +(135.000000:8.485281pt and 8.485281pt) -- +(225.000000:8.485281pt and 8.485281pt) -- cycle;
\draw (164.000000, 38.000000) node {$T$};
\end{scope}
\begin{scope}
\draw[fill=white] (164.000000, 19.000000) +(-45.000000:8.485281pt and 8.485281pt) -- +(45.000000:8.485281pt and 8.485281pt) -- +(135.000000:8.485281pt and 8.485281pt) -- +(225.000000:8.485281pt and 8.485281pt) -- cycle;
\clip (164.000000, 19.000000) +(-45.000000:8.485281pt and 8.485281pt) -- +(45.000000:8.485281pt and 8.485281pt) -- +(135.000000:8.485281pt and 8.485281pt) -- +(225.000000:8.485281pt and 8.485281pt) -- cycle;
\draw (164.000000, 19.000000) node {$T$};
\end{scope}
\begin{scope}
\draw[fill=white] (164.000000, -0.000000) +(-45.000000:8.485281pt and 8.485281pt) -- +(45.000000:8.485281pt and 8.485281pt) -- +(135.000000:8.485281pt and 8.485281pt) -- +(225.000000:8.485281pt and 8.485281pt) -- cycle;
\clip (164.000000, -0.000000) +(-45.000000:8.485281pt and 8.485281pt) -- +(45.000000:8.485281pt and 8.485281pt) -- +(135.000000:8.485281pt and 8.485281pt) -- +(225.000000:8.485281pt and 8.485281pt) -- cycle;
\draw (164.000000, -0.000000) node {$T$};
\end{scope}
\begin{scope}
\draw[fill=white] (188.000000, 57.000000) +(-45.000000:8.485281pt and 8.485281pt) -- +(45.000000:8.485281pt and 8.485281pt) -- +(135.000000:8.485281pt and 8.485281pt) -- +(225.000000:8.485281pt and 8.485281pt) -- cycle;
\clip (188.000000, 57.000000) +(-45.000000:8.485281pt and 8.485281pt) -- +(45.000000:8.485281pt and 8.485281pt) -- +(135.000000:8.485281pt and 8.485281pt) -- +(225.000000:8.485281pt and 8.485281pt) -- cycle;
\draw (188.000000, 57.000000) node {$X$};
\end{scope}
\begin{scope}
\draw[fill=white] (188.000000, 38.000000) +(-45.000000:8.485281pt and 8.485281pt) -- +(45.000000:8.485281pt and 8.485281pt) -- +(135.000000:8.485281pt and 8.485281pt) -- +(225.000000:8.485281pt and 8.485281pt) -- cycle;
\clip (188.000000, 38.000000) +(-45.000000:8.485281pt and 8.485281pt) -- +(45.000000:8.485281pt and 8.485281pt) -- +(135.000000:8.485281pt and 8.485281pt) -- +(225.000000:8.485281pt and 8.485281pt) -- cycle;
\draw (188.000000, 38.000000) node {$P$};
\end{scope}
\begin{scope}
\draw[fill=white] (188.000000, 19.000000) +(-45.000000:8.485281pt and 8.485281pt) -- +(45.000000:8.485281pt and 8.485281pt) -- +(135.000000:8.485281pt and 8.485281pt) -- +(225.000000:8.485281pt and 8.485281pt) -- cycle;
\clip (188.000000, 19.000000) +(-45.000000:8.485281pt and 8.485281pt) -- +(45.000000:8.485281pt and 8.485281pt) -- +(135.000000:8.485281pt and 8.485281pt) -- +(225.000000:8.485281pt and 8.485281pt) -- cycle;
\draw (188.000000, 19.000000) node {$P$};
\end{scope}
\begin{scope}
\draw[fill=white] (188.000000, -0.000000) +(-45.000000:8.485281pt and 8.485281pt) -- +(45.000000:8.485281pt and 8.485281pt) -- +(135.000000:8.485281pt and 8.485281pt) -- +(225.000000:8.485281pt and 8.485281pt) -- cycle;
\clip (188.000000, -0.000000) +(-45.000000:8.485281pt and 8.485281pt) -- +(45.000000:8.485281pt and 8.485281pt) -- +(135.000000:8.485281pt and 8.485281pt) -- +(225.000000:8.485281pt and 8.485281pt) -- cycle;
\draw (188.000000, -0.000000) node {$Y$};
\end{scope}
\begin{scope}
\draw[fill=white] (222.500000, 57.000000) +(-45.000000:18.384776pt and 10.606602pt) -- +(45.000000:18.384776pt and 10.606602pt) -- +(135.000000:18.384776pt and 10.606602pt) -- +(225.000000:18.384776pt and 10.606602pt) -- cycle;
\clip (222.500000, 57.000000) +(-45.000000:18.384776pt and 10.606602pt) -- +(45.000000:18.384776pt and 10.606602pt) -- +(135.000000:18.384776pt and 10.606602pt) -- +(225.000000:18.384776pt and 10.606602pt) -- cycle;
\draw (222.500000, 57.000000) node {$R_y(\theta)$};
\end{scope}
\begin{scope}
\draw[fill=white] (222.500000, 38.000000) +(-45.000000:18.384776pt and 10.606602pt) -- +(45.000000:18.384776pt and 10.606602pt) -- +(135.000000:18.384776pt and 10.606602pt) -- +(225.000000:18.384776pt and 10.606602pt) -- cycle;
\clip (222.500000, 38.000000) +(-45.000000:18.384776pt and 10.606602pt) -- +(45.000000:18.384776pt and 10.606602pt) -- +(135.000000:18.384776pt and 10.606602pt) -- +(225.000000:18.384776pt and 10.606602pt) -- cycle;
\draw (222.500000, 38.000000) node {$R_y(\theta)$};
\end{scope}
\begin{scope}
\draw[fill=white] (222.500000, 19.000000) +(-45.000000:18.384776pt and 10.606602pt) -- +(45.000000:18.384776pt and 10.606602pt) -- +(135.000000:18.384776pt and 10.606602pt) -- +(225.000000:18.384776pt and 10.606602pt) -- cycle;
\clip (222.500000, 19.000000) +(-45.000000:18.384776pt and 10.606602pt) -- +(45.000000:18.384776pt and 10.606602pt) -- +(135.000000:18.384776pt and 10.606602pt) -- +(225.000000:18.384776pt and 10.606602pt) -- cycle;
\draw (222.500000, 19.000000) node {$R_y(\theta)$};
\end{scope}
\begin{scope}
\draw[fill=white] (222.500000, -0.000000) +(-45.000000:23.334524pt and 10.606602pt) -- +(45.000000:23.334524pt and 10.606602pt) -- +(135.000000:23.334524pt and 10.606602pt) -- +(225.000000:23.334524pt and 10.606602pt) -- cycle;
\clip (222.500000, -0.000000) +(-45.000000:23.334524pt and 10.606602pt) -- +(45.000000:23.334524pt and 10.606602pt) -- +(135.000000:23.334524pt and 10.606602pt) -- +(225.000000:23.334524pt and 10.606602pt) -- cycle;
\draw (222.500000, -0.000000) node {$R_y(-\theta)$};
\end{scope}
\begin{scope}
\draw[fill=white] (268.000000, 57.000000) +(-45.000000:24.041631pt and 10.606602pt) -- +(45.000000:24.041631pt and 10.606602pt) -- +(135.000000:24.041631pt and 10.606602pt) -- +(225.000000:24.041631pt and 10.606602pt) -- cycle;
\clip (268.000000, 57.000000) +(-45.000000:24.041631pt and 10.606602pt) -- +(45.000000:24.041631pt and 10.606602pt) -- +(135.000000:24.041631pt and 10.606602pt) -- +(225.000000:24.041631pt and 10.606602pt) -- cycle;
\draw (268.000000, 57.000000) node {$R_z(-\frac{\pi}{3})$};
\end{scope}
\begin{scope}
\draw[fill=white] (268.000000, 38.000000) +(-45.000000:21.920310pt and 11.313708pt) -- +(45.000000:21.920310pt and 11.313708pt) -- +(135.000000:21.920310pt and 11.313708pt) -- +(225.000000:21.920310pt and 11.313708pt) -- cycle;
\clip (268.000000, 38.000000) +(-45.000000:21.920310pt and 11.313708pt) -- +(45.000000:21.920310pt and 11.313708pt) -- +(135.000000:21.920310pt and 11.313708pt) -- +(225.000000:21.920310pt and 11.313708pt) -- cycle;
\draw (268.000000, 38.000000) node {$R_z(\frac{\pi}{3})$};
\end{scope}
\begin{scope}
\draw[fill=white] (268.000000, 19.000000) +(-45.000000:8.485281pt and 8.485281pt) -- +(45.000000:8.485281pt and 8.485281pt) -- +(135.000000:8.485281pt and 8.485281pt) -- +(225.000000:8.485281pt and 8.485281pt) -- cycle;
\clip (268.000000, 19.000000) +(-45.000000:8.485281pt and 8.485281pt) -- +(45.000000:8.485281pt and 8.485281pt) -- +(135.000000:8.485281pt and 8.485281pt) -- +(225.000000:8.485281pt and 8.485281pt) -- cycle;
\draw (268.000000, 19.000000) node {$Z$};
\end{scope}
\begin{scope}
\draw[fill=white] (268.000000, -0.000000) +(-45.000000:21.920310pt and 10.606602pt) -- +(45.000000:21.920310pt and 10.606602pt) -- +(135.000000:21.920310pt and 10.606602pt) -- +(225.000000:21.920310pt and 10.606602pt) -- cycle;
\clip (268.000000, -0.000000) +(-45.000000:21.920310pt and 10.606602pt) -- +(45.000000:21.920310pt and 10.606602pt) -- +(135.000000:21.920310pt and 10.606602pt) -- +(225.000000:21.920310pt and 10.606602pt) -- cycle;
\draw (268.000000, -0.000000) node {$R_z(\frac{2\pi}{3})$};
\end{scope}
\end{tikzpicture}

\end{center}

\raisebox{15mm}{$\ee^{-\ii \frac{13\pi}{24}}\ket{M_{2222}}=$}
\begin{tikzpicture}[scale=1.160000,x=1pt,y=1pt]
\filldraw[color=white] (0.000000, -9.500000) rectangle (247.000000, 66.500000);
\draw[color=black] (0.000000,57.000000) -- (247.000000,57.000000);
\draw[color=black] (0.000000,57.000000) node[left] {$\ket{0}$};
\draw[color=black] (0.000000,38.000000) -- (247.000000,38.000000);
\draw[color=black] (0.000000,38.000000) node[left] {$\ket{0}$};
\draw[color=black] (0.000000,19.000000) -- (247.000000,19.000000);
\draw[color=black] (0.000000,19.000000) node[left] {$\ket{0}$};
\draw[color=black] (0.000000,0.000000) -- (247.000000,0.000000);
\draw[color=black] (0.000000,0.000000) node[left] {$\ket{0}$};
\begin{scope}
\draw[fill=white] (19.000000, 57.000000) +(-45.000000:8.485281pt and 8.485281pt) -- +(45.000000:8.485281pt and 8.485281pt) -- +(135.000000:8.485281pt and 8.485281pt) -- +(225.000000:8.485281pt and 8.485281pt) -- cycle;
\clip (19.000000, 57.000000) +(-45.000000:8.485281pt and 8.485281pt) -- +(45.000000:8.485281pt and 8.485281pt) -- +(135.000000:8.485281pt and 8.485281pt) -- +(225.000000:8.485281pt and 8.485281pt) -- cycle;
\draw (19.000000, 57.000000) node {$H$};
\end{scope}
\begin{scope}
\draw[fill=white] (19.000000, 38.000000) +(-45.000000:8.485281pt and 8.485281pt) -- +(45.000000:8.485281pt and 8.485281pt) -- +(135.000000:8.485281pt and 8.485281pt) -- +(225.000000:8.485281pt and 8.485281pt) -- cycle;
\clip (19.000000, 38.000000) +(-45.000000:8.485281pt and 8.485281pt) -- +(45.000000:8.485281pt and 8.485281pt) -- +(135.000000:8.485281pt and 8.485281pt) -- +(225.000000:8.485281pt and 8.485281pt) -- cycle;
\draw (19.000000, 38.000000) node {$H$};
\end{scope}
\begin{scope}
\draw[fill=white] (19.000000, 19.000000) +(-45.000000:18.384776pt and 10.606602pt) -- +(45.000000:18.384776pt and 10.606602pt) -- +(135.000000:18.384776pt and 10.606602pt) -- +(225.000000:18.384776pt and 10.606602pt) -- cycle;
\clip (19.000000, 19.000000) +(-45.000000:18.384776pt and 10.606602pt) -- +(45.000000:18.384776pt and 10.606602pt) -- +(135.000000:18.384776pt and 10.606602pt) -- +(225.000000:18.384776pt and 10.606602pt) -- cycle;
\draw (19.000000, 19.000000) node {$R_y(\theta)$};
\end{scope}
\begin{scope}
\draw[fill=white] (19.000000, -0.000000) +(-45.000000:18.384776pt and 10.606602pt) -- +(45.000000:18.384776pt and 10.606602pt) -- +(135.000000:18.384776pt and 10.606602pt) -- +(225.000000:18.384776pt and 10.606602pt) -- cycle;
\clip (19.000000, -0.000000) +(-45.000000:18.384776pt and 10.606602pt) -- +(45.000000:18.384776pt and 10.606602pt) -- +(135.000000:18.384776pt and 10.606602pt) -- +(225.000000:18.384776pt and 10.606602pt) -- cycle;
\draw (19.000000, -0.000000) node {$R_y(\theta)$};
\end{scope}
\begin{scope}
\draw[fill=white] (50.000000, 57.000000) +(-45.000000:8.485281pt and 8.485281pt) -- +(45.000000:8.485281pt and 8.485281pt) -- +(135.000000:8.485281pt and 8.485281pt) -- +(225.000000:8.485281pt and 8.485281pt) -- cycle;
\clip (50.000000, 57.000000) +(-45.000000:8.485281pt and 8.485281pt) -- +(45.000000:8.485281pt and 8.485281pt) -- +(135.000000:8.485281pt and 8.485281pt) -- +(225.000000:8.485281pt and 8.485281pt) -- cycle;
\draw (50.000000, 57.000000) node {$P$};
\end{scope}
\begin{scope}
\draw[fill=white] (50.000000, 38.000000) +(-45.000000:8.485281pt and 8.485281pt) -- +(45.000000:8.485281pt and 8.485281pt) -- +(135.000000:8.485281pt and 8.485281pt) -- +(225.000000:8.485281pt and 8.485281pt) -- cycle;
\clip (50.000000, 38.000000) +(-45.000000:8.485281pt and 8.485281pt) -- +(45.000000:8.485281pt and 8.485281pt) -- +(135.000000:8.485281pt and 8.485281pt) -- +(225.000000:8.485281pt and 8.485281pt) -- cycle;
\draw (50.000000, 38.000000) node {$P$};
\end{scope}
\begin{scope}
\draw[fill=white] (50.000000, 19.000000) +(-45.000000:8.485281pt and 8.485281pt) -- +(45.000000:8.485281pt and 8.485281pt) -- +(135.000000:8.485281pt and 8.485281pt) -- +(225.000000:8.485281pt and 8.485281pt) -- cycle;
\clip (50.000000, 19.000000) +(-45.000000:8.485281pt and 8.485281pt) -- +(45.000000:8.485281pt and 8.485281pt) -- +(135.000000:8.485281pt and 8.485281pt) -- +(225.000000:8.485281pt and 8.485281pt) -- cycle;
\draw (50.000000, 19.000000) node {$T$};
\end{scope}
\begin{scope}
\draw[fill=white] (50.000000, -0.000000) +(-45.000000:8.485281pt and 8.485281pt) -- +(45.000000:8.485281pt and 8.485281pt) -- +(135.000000:8.485281pt and 8.485281pt) -- +(225.000000:8.485281pt and 8.485281pt) -- cycle;
\clip (50.000000, -0.000000) +(-45.000000:8.485281pt and 8.485281pt) -- +(45.000000:8.485281pt and 8.485281pt) -- +(135.000000:8.485281pt and 8.485281pt) -- +(225.000000:8.485281pt and 8.485281pt) -- cycle;
\draw (50.000000, -0.000000) node {$T$};
\end{scope}
\draw (71.000000,38.000000) -- (71.000000,0.000000);
\begin{scope}
\draw[fill=white] (71.000000, 0.000000) circle(3.000000pt);
\clip (71.000000, 0.000000) circle(3.000000pt);
\draw (68.000000, 0.000000) -- (74.000000, 0.000000);
\draw (71.000000, -3.000000) -- (71.000000, 3.000000);
\end{scope}
\filldraw (71.000000, 38.000000) circle(1.500000pt);
\draw (89.000000,57.000000) -- (89.000000,0.000000);
\begin{scope}
\draw[fill=white] (89.000000, 57.000000) circle(3.000000pt);
\clip (89.000000, 57.000000) circle(3.000000pt);
\draw (86.000000, 57.000000) -- (92.000000, 57.000000);
\draw (89.000000, 54.000000) -- (89.000000, 60.000000);
\end{scope}
\filldraw (89.000000, 0.000000) circle(1.500000pt);
\draw (107.000000,57.000000) -- (107.000000,19.000000);
\begin{scope}
\draw[fill=white] (107.000000, 19.000000) circle(3.000000pt);
\clip (107.000000, 19.000000) circle(3.000000pt);
\draw (104.000000, 19.000000) -- (110.000000, 19.000000);
\draw (107.000000, 16.000000) -- (107.000000, 22.000000);
\end{scope}
\filldraw (107.000000, 57.000000) circle(1.500000pt);
\draw (125.000000,38.000000) -- (125.000000,19.000000);
\begin{scope}
\draw[fill=white] (125.000000, 38.000000) circle(3.000000pt);
\clip (125.000000, 38.000000) circle(3.000000pt);
\draw (122.000000, 38.000000) -- (128.000000, 38.000000);
\draw (125.000000, 35.000000) -- (125.000000, 41.000000);
\end{scope}
\filldraw (125.000000, 19.000000) circle(1.500000pt);
\draw (143.000000,57.000000) -- (143.000000,38.000000);
\begin{scope}
\draw[fill=white] (143.000000, 57.000000) circle(3.000000pt);
\clip (143.000000, 57.000000) circle(3.000000pt);
\draw (140.000000, 57.000000) -- (146.000000, 57.000000);
\draw (143.000000, 54.000000) -- (143.000000, 60.000000);
\end{scope}
\filldraw (143.000000, 38.000000) circle(1.500000pt);
\begin{scope}
\draw[fill=white] (164.000000, 38.000000) +(-45.000000:8.485281pt and 8.485281pt) -- +(45.000000:8.485281pt and 8.485281pt) -- +(135.000000:8.485281pt and 8.485281pt) -- +(225.000000:8.485281pt and 8.485281pt) -- cycle;
\clip (164.000000, 38.000000) +(-45.000000:8.485281pt and 8.485281pt) -- +(45.000000:8.485281pt and 8.485281pt) -- +(135.000000:8.485281pt and 8.485281pt) -- +(225.000000:8.485281pt and 8.485281pt) -- cycle;
\draw (164.000000, 38.000000) node {$P$};
\end{scope}
\begin{scope}
\draw[fill=white] (164.000000, 19.000000) +(-45.000000:8.485281pt and 8.485281pt) -- +(45.000000:8.485281pt and 8.485281pt) -- +(135.000000:8.485281pt and 8.485281pt) -- +(225.000000:8.485281pt and 8.485281pt) -- cycle;
\clip (164.000000, 19.000000) +(-45.000000:8.485281pt and 8.485281pt) -- +(45.000000:8.485281pt and 8.485281pt) -- +(135.000000:8.485281pt and 8.485281pt) -- +(225.000000:8.485281pt and 8.485281pt) -- cycle;
\draw (164.000000, 19.000000) node {$T$};
\end{scope}
\begin{scope}
\draw[fill=white] (164.000000, -0.000000) +(-45.000000:8.485281pt and 8.485281pt) -- +(45.000000:8.485281pt and 8.485281pt) -- +(135.000000:8.485281pt and 8.485281pt) -- +(225.000000:8.485281pt and 8.485281pt) -- cycle;
\clip (164.000000, -0.000000) +(-45.000000:8.485281pt and 8.485281pt) -- +(45.000000:8.485281pt and 8.485281pt) -- +(135.000000:8.485281pt and 8.485281pt) -- +(225.000000:8.485281pt and 8.485281pt) -- cycle;
\draw (164.000000, -0.000000) node {$Z$};
\end{scope}
\begin{scope}
\draw[fill=white] (199.500000, 57.000000) +(-45.000000:19.798990pt and 10.606602pt) -- +(45.000000:19.798990pt and 10.606602pt) -- +(135.000000:19.798990pt and 10.606602pt) -- +(225.000000:19.798990pt and 10.606602pt) -- cycle;
\clip (199.500000, 57.000000) +(-45.000000:19.798990pt and 10.606602pt) -- +(45.000000:19.798990pt and 10.606602pt) -- +(135.000000:19.798990pt and 10.606602pt) -- +(225.000000:19.798990pt and 10.606602pt) -- cycle;
\draw (199.500000, 57.000000) node {$R_y(\frac{\pi}{4})$};
\end{scope}
\begin{scope}
\draw[fill=white] (199.500000, 38.000000) +(-45.000000:24.748737pt and 10.606602pt) -- +(45.000000:24.748737pt and 10.606602pt) -- +(135.000000:24.748737pt and 10.606602pt) -- +(225.000000:24.748737pt and 10.606602pt) -- cycle;
\clip (199.500000, 38.000000) +(-45.000000:24.748737pt and 10.606602pt) -- +(45.000000:24.748737pt and 10.606602pt) -- +(135.000000:24.748737pt and 10.606602pt) -- +(225.000000:24.748737pt and 10.606602pt) -- cycle;
\draw (199.500000, 38.000000) node {$R_y(-\frac{\pi}{4})$};
\end{scope}
\begin{scope}
\draw[fill=white] (199.500000, 19.000000) +(-45.000000:8.485281pt and 8.485281pt) -- +(45.000000:8.485281pt and 8.485281pt) -- +(135.000000:8.485281pt and 8.485281pt) -- +(225.000000:8.485281pt and 8.485281pt) -- cycle;
\clip (199.500000, 19.000000) +(-45.000000:8.485281pt and 8.485281pt) -- +(45.000000:8.485281pt and 8.485281pt) -- +(135.000000:8.485281pt and 8.485281pt) -- +(225.000000:8.485281pt and 8.485281pt) -- cycle;
\draw (199.500000, 19.000000) node {$H$};
\end{scope}
\begin{scope}
\draw[fill=white] (199.500000, -0.000000) +(-45.000000:24.748737pt and 10.606602pt) -- +(45.000000:24.748737pt and 10.606602pt) -- +(135.000000:24.748737pt and 10.606602pt) -- +(225.000000:24.748737pt and 10.606602pt) -- cycle;
\clip (199.500000, -0.000000) +(-45.000000:24.748737pt and 10.606602pt) -- +(45.000000:24.748737pt and 10.606602pt) -- +(135.000000:24.748737pt and 10.606602pt) -- +(225.000000:24.748737pt and 10.606602pt) -- cycle;
\draw (199.500000, -0.000000) node {$R_y(\frac{3\pi}{4})$};
\end{scope}
\begin{scope}
\draw[fill=white] (235.000000, 57.000000) +(-45.000000:8.485281pt and 8.485281pt) -- +(45.000000:8.485281pt and 8.485281pt) -- +(135.000000:8.485281pt and 8.485281pt) -- +(225.000000:8.485281pt and 8.485281pt) -- cycle;
\clip (235.000000, 57.000000) +(-45.000000:8.485281pt and 8.485281pt) -- +(45.000000:8.485281pt and 8.485281pt) -- +(135.000000:8.485281pt and 8.485281pt) -- +(225.000000:8.485281pt and 8.485281pt) -- cycle;
\draw (235.000000, 57.000000) node {$P$};
\end{scope}
\begin{scope}
\draw[fill=white] (235.000000, 38.000000) +(-45.000000:8.485281pt and 8.485281pt) -- +(45.000000:8.485281pt and 8.485281pt) -- +(135.000000:8.485281pt and 8.485281pt) -- +(225.000000:8.485281pt and 8.485281pt) -- cycle;
\clip (235.000000, 38.000000) +(-45.000000:8.485281pt and 8.485281pt) -- +(45.000000:8.485281pt and 8.485281pt) -- +(135.000000:8.485281pt and 8.485281pt) -- +(225.000000:8.485281pt and 8.485281pt) -- cycle;
\draw (235.000000, 38.000000) node {$P^{\dag}$};
\end{scope}
\begin{scope}
\draw[fill=white] (235.000000, 19.000000) +(-45.000000:8.485281pt and 8.485281pt) -- +(45.000000:8.485281pt and 8.485281pt) -- +(135.000000:8.485281pt and 8.485281pt) -- +(225.000000:8.485281pt and 8.485281pt) -- cycle;
\clip (235.000000, 19.000000) +(-45.000000:8.485281pt and 8.485281pt) -- +(45.000000:8.485281pt and 8.485281pt) -- +(135.000000:8.485281pt and 8.485281pt) -- +(225.000000:8.485281pt and 8.485281pt) -- cycle;
\draw (235.000000, 19.000000) node {$Z$};
\end{scope}
\begin{scope}
\draw[fill=white] (235.000000, -0.000000) +(-45.000000:8.485281pt and 8.485281pt) -- +(45.000000:8.485281pt and 8.485281pt) -- +(135.000000:8.485281pt and 8.485281pt) -- +(225.000000:8.485281pt and 8.485281pt) -- cycle;
\clip (235.000000, -0.000000) +(-45.000000:8.485281pt and 8.485281pt) -- +(45.000000:8.485281pt and 8.485281pt) -- +(135.000000:8.485281pt and 8.485281pt) -- +(225.000000:8.485281pt and 8.485281pt) -- cycle;
\draw (235.000000, -0.000000) node {$P$};
\end{scope}
\end{tikzpicture}

\caption{Quantum circuits generating the states $\ket{L}$, $\ket{\Phi_5}$ and $\ket{M_{2222}}$ up to a global phase. For better readability, most of the rotations around the $\hat y$ and $\hat z$ axes defined by the matrices of parameters are written using the universal single-qubit gates $H,P,T$ (see Identity \eqref{Rz-classical}).\label{circuits}}
 \end{figure}
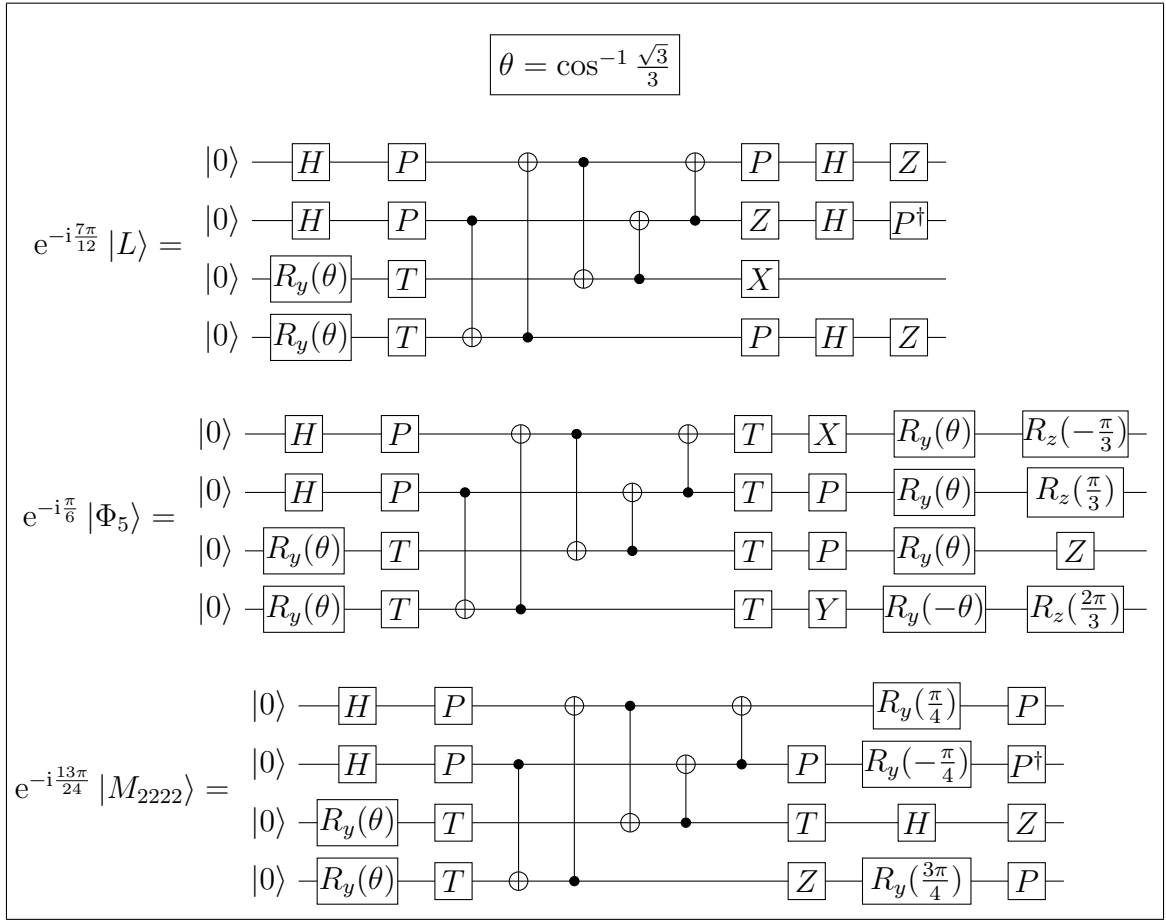

 \medskip
 
 We implemented the circuit generating the state $\ket{L}$ in one of the quantum computers publicly available at \href{https://quantum-computing.ibm.com/}{https://quantum-computing.ibm.com/}. In those computers full connectivity between the qubits is not achieved and the direct connections allowed between two qubits are given by a graph. Moreover, due to the noise in the gates, and particularly the 2-qubit gates, it is of crucial importance to use as few $\cnot$ gates as possible. We chose the 5-qubit
 ibmq\_quito computer because its graph is $\{\{1, 0\}, \{1, 2\}, \{1, 3\}, \{3, 4\}\}$, hence the gates $X_{[01]}, X_{[12]}$ and $X_{[31]}$ of the $\cnot$ subcircuit that implements the operator $M^{(0,1)}_2=X_{[01]}X_{[12]}X_{[20]}X_{[03]}X_{[31]}$ are already native gates. The two other gates of the $\cnot$ subcircuits, namely $X_{[20]}$ and $X_{[03]}$ are not native $\cnot$ gates and can be simulated thanks to the use of $\swap$ gates. Finally the operator $M^{(0,1)}_2$ can be implemented using only 11 native $\cnot$ gates :
     \begin{equation}
M^{(0,1)}_2 = X_{[01]}X_{[12]}\underbrace{X_{[01]}X_{[10]}X_{[01]}}_{S_{(01)}}X_{[21]}X_{[13]}\underbrace{X_{[01]}X_{[10]}X_{[01]}}_{S_{(01)}}X_{[31]}.
\end{equation}
After compilation by the IBM algorithm (the process is called \textit{transpilation} on the website), the quantum circuit implementing the state $\ket{L}$ uses 22 single-qubit native gates, 11 $\cnot$ gates and has a total depth of 18  (see Figure \ref{L-on-IBM}). However, despite of this moderate length, we observed after measurement the apparition of a large quantity of scorias (see the bar chart in Figure \ref{L-on-IBM}). Indeed, from Identity \eqref{L-def}, one has
\begin{equation}
\ket{L}=\textstyle\frac{1}{2}\ee^{\ii\frac{\pi}{6}}(\ket{0000} + \ket{1111}) + \textstyle\frac{\sqrt3}{6}\ee^{-\ii\frac{\pi}{3}}(\ket{0011} + \ket{0101} + \ket{0110} + \ket{1001} + \ket{1010} + \ket{1100}),
  \end{equation}
  so the states   $\ket{0001}, \ket{0010}, \ket{0100}, \ket{1000}, \ket{0111} \ket{1011}, \ket{1101}$ or $\ket{1110}$ should not appear after measurement. The main causes of this problem are, on one hand the measurement errors (average readout error is about $3$ percent on this device), on the other hand the noise in the gates ($\cnot$ gate average error is about $1.3$ percent). This suggests that there are still significant technological challenges to overcome before we can implement the state $\ket{L}$ in a reliable fashion.
  \begin{figure}
    \vspace{2mm}
    
  Original circuit :\vspace{-5mm}
  
  \quad\includegraphics[scale=0.34, viewport=0cm 0.5cm 30.5cm 11cm, clip=true]{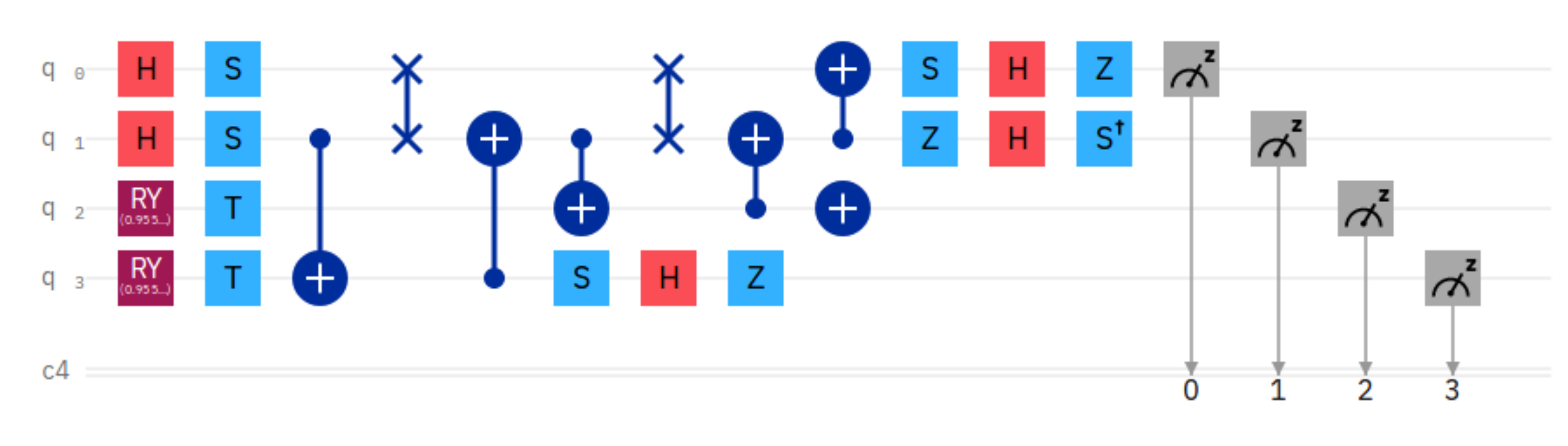}

  Transpiled circuit :
  
  \quad\includegraphics[scale=0.34, viewport=0cm 0.5cm 41cm 11cm, clip=true]{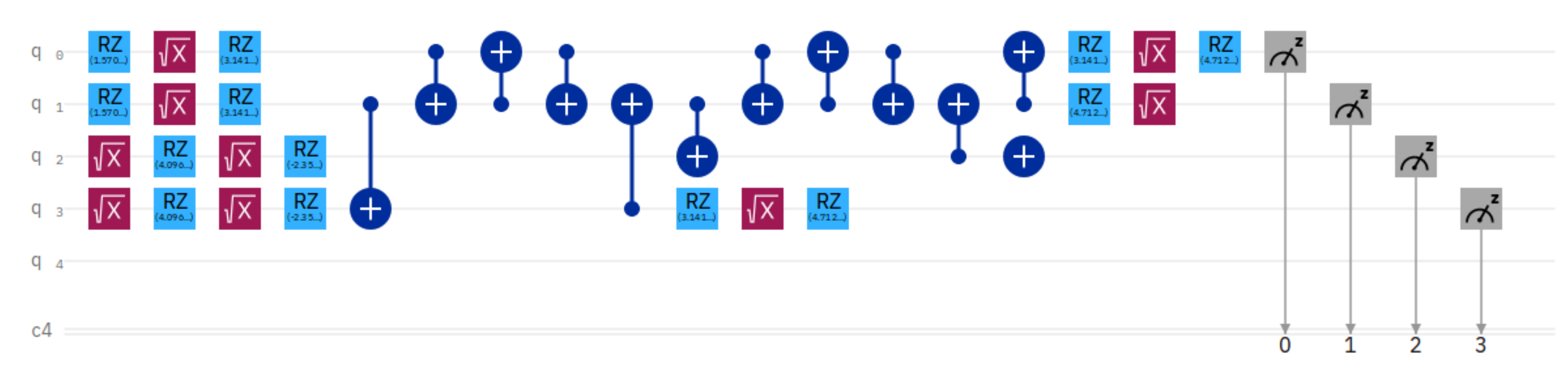}

  Measurements :\vspace{2mm}
  
  \quad\includegraphics[scale=0.4, viewport=0cm 0cm 32.5cm 8cm, clip=true]{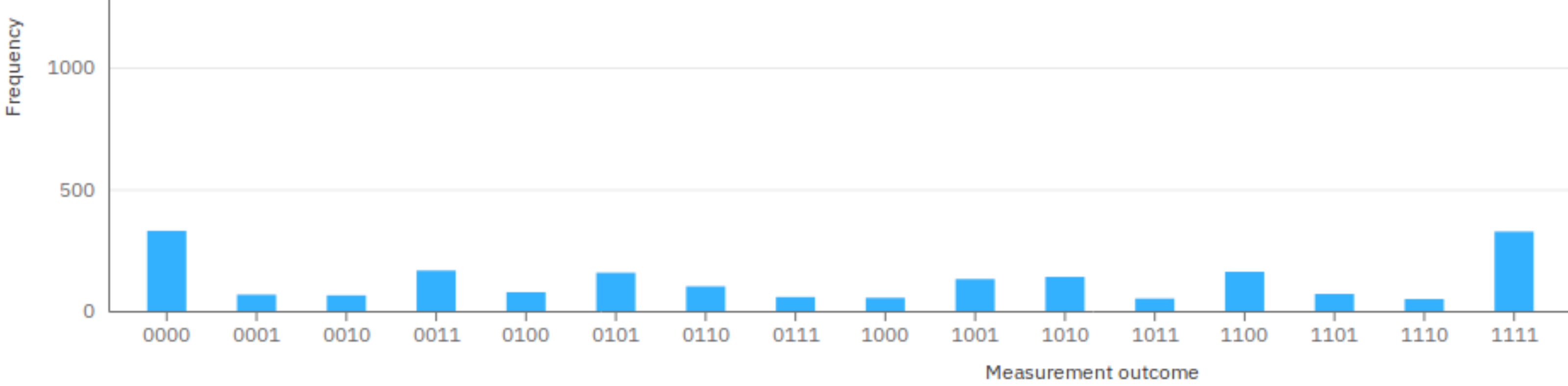}
\caption{Implementation in the ibmq\_quito quantum computer of a circuit generating the state $\ket{L}$. The bar chart is based on 2000 measurements. \label{L-on-IBM}}
\end{figure}
\section{Conclusion and perspectives}
In this work we described how a $\cnot$ circuit acting on a factorized state can produce four-qubit maximum hyperdeterminant states and  we proposed a quantum circuit generating the state $\ket{L}$, whose interesting properties where described by Gour and Wallach \cite{2012GW} and by Chen and Djokovic \cite{2013CD}. It would be interesting to know whether it is possible to generalize this result when the number of qubits $n$ is greater than 4. Is it still possible to reach a MHS by a $\cnot$ circuit acting on a factorized state ? What would be in this case the generalization of the unitary $M^{(i,j)}_k$ to higher dimensions ?
However, answering these questions seems to be currently out of reach because an explicit polynomial expression of the hyperdeterminant is known only up to 4 qubits.
A first approach would be to  know if a generically entangled state (\textit{i.e.} a state $\ket{\psi}$ such that $\Delta_n(\ket{\psi})\neq 0$) can be produced by a $\cnot$ circuit acting on a factorized state in the case of any $n$-qubit system. Indeed, the vanishing of $\Delta_n$ can be tested using the following criterion \cite[p. 445]{1992GKL} :
let $A = \displaystyle\sum_{0\leq i_0,\dots,i_{n-1}\leq 1}a_{i_0\dots i_{n-1}}x^{(0)}_{i_0}\dots x^{(n-1)}_{i_{n-1}}$ be the multilinear form associated to the $n$-qubit state
$\ket{\psi}= \displaystyle\sum_{0\leq i_0,\dots,i_{n-1}\leq 1}a_{i_0\dots i_{n-1}}\ket{i_0\dots i_{n-1}}$, then the condition $\Delta_n(\ket{\psi})= 0$ means that the system
\begin{equation}
  \{A=\frac{\partial A}{\partial x^{(0)}_{0}}=\frac{\partial A}{\partial x^{(0)}_{1}}=\dots=\frac{\partial A}{\partial x^{(n-1)}_{0}}=\frac{\partial A}{\partial x^{(n-1)}_{1}}=0\}
  \end{equation}
  has a solution $(x^{(0)}_{0}, x^{(0)}_{1},\dots, x^{(n-1)}_{0}, x^{(n-1)}_{1})$ such that $( x^{(i)}_{0}, x^{(i)}_{1})\neq (0,0)$ for any $i=0\dots  n-1$. Such a solution is called non trivial. Therefore, to show that a state $\ket{\psi}$ is generically entangled, it is sufficient to prove that the system corresponding to $\ket{\psi}$ has no solutions apart from the trivial solutions. We will go back to these questions in future works.

\section{Acknowledgements}
The author acknowledges the use of the IBM Quantum Experience at \url{https://quantum-computing.ibm.com/}. The views
expressed are those of the author and do not reflect the official policy or position of IBM or
the IBM Quantum Experience team. 

Part of this work was performed using computing resources of the CRIANN (Normandy, France).

\bibliographystyle{plain}
\bibliography{biblio_MB}

\end{document}